\definecolor{webgreen}{rgb}{0,0.4,0}
\definecolor{webbrown}{rgb}{0.6,0,0}
\definecolor{purple}{rgb}{0.5,0,0.25}
\definecolor{darkblue}{rgb}{0,0,0.7}
\definecolor{darkred}{rgb}{0.7,0,0}
\definecolor{darkgreen}{rgb}{0,0.7,0}
\newcommand{\ignore}[1]{}
\newtheorem{lemma}{{\sc Lemma}}
\newtheorem{theorem}{{\sc Theorem}}
\newtheorem{defn}{{\sc Definition}}
\newtheorem{claim}{{\sc Claim}}
\newenvironment{proof}{\noindent {\bf \sl Proof\/}:\enspace}
{\hfill $\blacksquare{}$ \vspace{12pt}}
\begin{document}

\allowdisplaybreaks

\begin{titlepage}
\title{\Large{\textbf{Selling to a manager and a budget-constrained agent}}\thanks{
A slightly different version of this paper is part of the Ph.D. thesis of 
the second author submitted to the Indian Statistical Institute. 
This paper supersedes an earlier working paper
titled ``selling to a naive (agent, manager) pair'' with significant differences in model and results. We are grateful to two anonymous referees and an associate editor for detailed comments that led to significant improvements in the paper. We are also graetful to 
Sushil Bikhchandani, Francis Bloch, Abhinash Borah, Juan Carlos Carbajal, Rahul Deb, Bhaskar Dutta,
Yoram Harlevy, Sridhar Moorthy, Arunava Sen, Ran Speigler, Rakesh Vohra, and seminar participants
at several conferences for helpful comments on the earlier version
which led us to think of the model in the current paper. Debasis Mishra acknowledges financial support from the Science and Engineering Research Board (SERB Grant No. SERB/CRG/2021/003099) of India.}}
\author{Debasis Mishra\thanks{Indian Statistical Institute, Delhi ({\tt dmishra@isid.ac.in}).}$\;\;$ and Kolagani Paramahamsa\thanks{Indian Statistical Institute, Delhi ({\tt kolagani.paramahamsa@gmail.com}).}}

\maketitle

\begin{abstract}

We analyze a model of selling a single object to a manager-agent pair who want to acquire the object for a firm.
The manager and the agent have different assessments of the object's value to the firm.
The agent is budget-constrained while the manager is not. The agent participates in the mechanism, but she can (strategically) approach the manager for decision-making. We derive the revenue-maximizing
mechanism in a two-dimensional type space (values of the agent and the manager). We show that below a threshold budget,
a mechanism involving two posted prices and three outcomes (one of which involves randomization) is
the optimal mechanism for the seller. Otherwise, a single posted price mechanism is optimal.

\bigskip
\noindent
JEL Classification number: D82 \\
%\medskip

\noindent
Keywords: budget constraint, posted price, multidimensional mechanism design, behavioral mechanism design

\end{abstract}
\thispagestyle{empty}
\end{titlepage}

\section{Introduction}
\label{sec:intro}

A seller is selling an object to a firm.
An agent (a representative of the firm) participates in a mechanism
to acquire the object. The agent's payment
cannot exceed a budget. However, the agent can approach the manager 
(the board of directors), who is not budget-constrained, for decision-making.
When approached, the manager takes a decision based on his own preference 
over outcomes. Otherwise, the agent takes a decision based on her 
preference (which can be different from the manager) over outcomes but 
respecting the budget constraint.

The object is of value to the firm, where the manager and the
agent are shareholders. So, they both want to maximize the firm's payoff
from acquiring the object. However, they evaluate the value of the object to the firm differently.
In particular, we assume that the agent finds more value in the object than the manager. The difference in valuation may be because the agent gets some additional personal value when the firm acquires the object  or because the agent is more informed about the potential uses of the object.
We assume that the value to the manager is common knowledge
among the manager and the agent, but the value to the agent is privately known to her.\footnote{The common
knowledge of the value of the manager among the manager-agent pair may be
because the agent knows some common attributes of the object that
the manager uses to evaluate the object. On the other hand, the manager does not know other uses of the object or the personal value of the object to the agent.
The agent cannot persuade the manager about the extra value of the object.}
The seller does not know the values of the manager and the agent but observes the agent's
budget constraint. The agent's budget constraint
may be reflected by the liquidity constraint of the firm, which may be verified from publicly
available information. \footnote{Some of the results can be extended 
to the case with private budget, but requires significantly longer analysis. 
\citet{P21} contains the details.}

Our agent approaches the manager for decision-making if the best outcome from the menu of
the mechanism in her {\sl budget set} is worse than the manager's best outcome from
the menu of the mechanism. Otherwise, the agent chooses the best outcome from the menu of
the mechanism in her budget set.

What is the expected revenue-maximizing mechanism of the seller? The communication 
between the manager and the agent (when the budget constraint binds for the 
agent) has different implications on incentive constraints than in the usual 
mechanism design problems. We show that a revelation principle holds and a 
new class of (direct) mechanisms, which we call {\sc post-2} mechanisms, are 
incentive compatible.

In a {\sc post-2} mechanism, the seller posts two prices: $\kappa_1,\kappa_2$, 
both above the budget $b$. If the agent has value less than $\kappa_1$, the 
object is not allocated (and zero payment
is made). Else, a fraction $\frac{b}{\kappa_1}$ of the object is allocated at 
a price equal to the budget $b$.\footnote{If the object is indivisible, this 
fraction is the probability of getting the object. If the object is divisible, this 
fraction is interpreted as the share of the object allocated.} Further, if the
manager has value more than $\kappa_2$ (and the agent has value more than 
$\kappa_1$), the remaining
fraction of the object $\big(1-\frac{b}{\kappa_1}\big)$ is allocated at a price 
$\big(1-\frac{b}{\kappa_1}\big)\kappa_2$.
This mechanism is akin to a two-part tariff mechanism.

A simpler class of mechanisms ignores the value of the manager and
considers only the value of the agent. By definition, such a mechanism cannot charge more than the budget. A {\sc post-1} mechanism is such a posted price mechanism with a price $\kappa \le b$.

Our main result says that there is a threshold $b^*$
such that for all budgets $b \le b^*$, the optimal mechanism is a {\sc post-2} mechanism and for all $b > b^*$,
the optimal mechanism is a {\sc post-1} mechanism. The threshold corresponds to the optimal posted-price
of an {\sl unconstrained agent} whose values are drawn using the marginal distribution of the current model, i.e.,
$b^*$ is the optimal solution to $\max x(1-F_1(x))$, where $F_1$ is the marginal distribution of the value
of the agent. For all our results, we assume that $x(1-F_1(x))$ is strictly concave, an assumption
satisfied by many distributions (which also allow for correlation among values of the manager and the agent).
This allows us to identify $b^*$ uniquely. Under stronger conditions on distributions, we give a more
precise description of the optimal mechanism.

These results are in contrast with the single object mechanism design literature, where
a deterministic posted-price mechanism is optimal~\citep{Mussa78}. The deterministic optimality result
usually does not extend to multidimensional mechanism design problems. Even for two-dimensional mechanism
problems, the menu in the optimal mechanism may contain an infinite set of outcomes~\citep{HN19}. On the other hand, ours is a
two-dimensional mechanism design problem, and the optimal mechanism has three outcomes in the menu. The particular
nature of decision-making allows tractability in our two-dimensional model and results in a simple solution. To elaborate on this a bit further, the decision-making of the manager and the agent are separable in some sense: the agent can take a decision if her optimal outcome in her budget set is better than the optimal outcome of the manager. This crucial assumption allows us to first optimize over all mechanisms in which all types pay less than the budget. Standard techniques lead to a posted price mechanism, which is our {\sc post-1} mechanism, to be optimal. If we look at a mechanism where some types pay more than the budget, we can partition the type space into three parts (each part corresponding to either the agent or the manager taking the decisions). We then establish upper bounds on revenue from a mechanism in each part, and show that this upper bound can be achieved by a {\sc post-2} mechanism. Summarizing, we overcome the difficulties of multidimensional screening problems by generating incentive compatibility constraints through a sequence of optimization over each dimension. This may be useful in future in considering general multidimensional screening problems.

%\sout{Besides the agent and the board of directors example
%highlighted throughout the paper, there are other settings where such decision-making seems plausible:
%a sports team owner (manager)
%and the team representative (agent) trying to recruit a new player; the dean and a department
%hiring a new faculty; parents (manager) and their child (agent) buying a
%product. Our results shed
%light on the nature of the optimal mechanism in these settings.}

\subsection{An illustration}
\label{sec:illust}

Suppose the values of the agent ($v_1$) and the manager ($v_2$) are
distributed in the triangle: $\{(v_1,v_2) \in [0,1]^2: v_1 \ge v_2\}$.
Consider a simple posted-price mechanism $p > b$ such that the object is allocated at price $p$ if
the manager has value more than $p$. Otherwise, the object is not allocated and zero payment is
made. The left triangle in Figure \ref{fig:illu} illustrates the mechanism\footnote{Here in the outcomes $(1,p)$, $(\frac{b}{p},b)$ etc the first argument is the allocation probability and the second is payment made to the seller.}.
This mechanism is incentive compatible (IC) since the agent and the principle prefer getting the object when $v_1 \geq v_2 \geq p$. The agent approaches the manager for decision-making in this case (since $p > b$) and chooses not to buy otherwise.

However, the seller can improve expected revenue from this mechanism by the following mechanism.
The seller does not allocate the object if the value of the agent is less than $p$.
Types $(v_1,v_2)$ with $v_1 \ge p$ but $v_2 < p$ get the object with probability $\frac{b}{p}$
and pay $b$. Types $(v_1,v_2)$ with $v_1 \ge v_2 \ge p$ get the object with probability $1$ and pay $p$.
So, the seller allocates the object more often in this mechanism and collects more revenue from
types $(v_1,v_2)$ with $v_1 \ge p$ and $v_2 < p$. The right triangle in Figure \ref{fig:illu} illustrates this mechanism.
\begin{figure}[!hbt]
\centering
\includegraphics[width=4.5in]{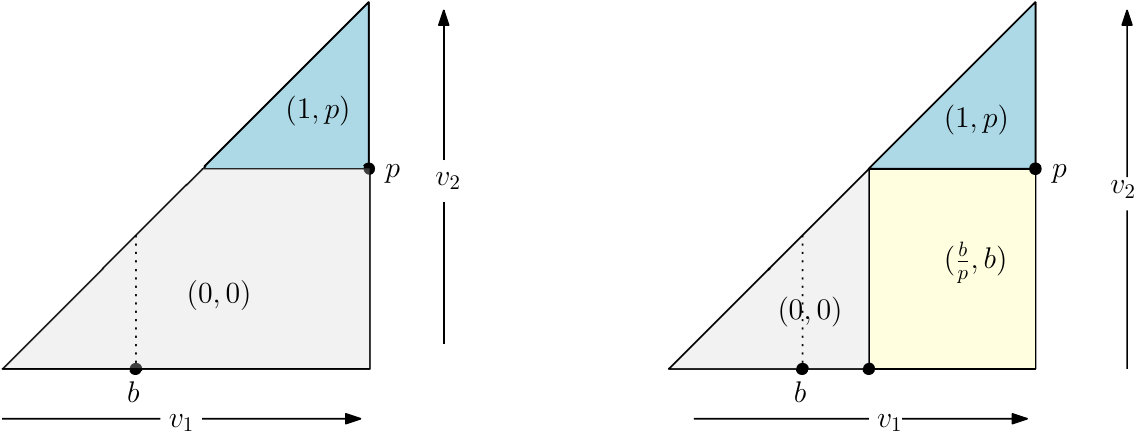}
\caption{Illustration of new IC mechanisms}
\label{fig:illu}
\end{figure}

Why is this mechanism incentive compatible? There are three types of manager-agent pairs corresponding
to the three outcomes in the menu of the mechanism. When $v_2 \le v_1 < p$, both the agent and the manager prefer the
outcome of not getting the object and zero payment to other outcomes in the menu. Similarly, when
$p \le v_2 \le v_1$, both the agent and the manager prefer getting the object at a price $p$ to
the other outcomes in the menu. In this case, the agent approaches the manager for decision-making (since $p > b$).

The types $(v_1,v_2)$ with $v_1 \ge p > v_2$ require a careful look.
At any such type $(v_1,v_2)$, the agent's best outcome in her {\sl budget set} is
to get the object with probability $\frac{b}{p}$ at price $b$, since $\frac{b}{p}(v_1-p) \ge 0$.
However, the agent prefers the outcome outside her budget set to this outcome: $v_1 - p \ge \frac{b}{p}(v_1-p)$
with strict inequality holding if $v_1 > p$. But she knows that the best outcome for the manager is to
not get the object and pay zero: $v_2 - p < \frac{b}{p} (v_2 - p) < 0$. So, the agent is strategic
and chooses the best outcome in her budget set. This makes the mechanism incentive compatible.

It is worth noting that when $v_2 < p \le v_1$, the manager gets a negative payoff in the mechanism.
Our interpretation is that the manager incurs a cost in making decisions. 
On average, considering this cost, the manager’s ex-ante payoff is positive. 
Hence, the manager is happy to delegate decision-making to the agent with a budget constraint.
We do not model this explicitly but discuss models in the literature that study such contracts between the manager and the agent. 
In Section \ref{sec:disc}, we illustrate with an example that the manager’s ex-ante payoff is positive in our optimal mechanism.

The seller is able to exploit this by offering an additional outcome in the menu of the mechanism
that the agent will find optimal to accept at some types. This allows the seller to extract more revenue.
Though there are other mechanisms that can allow the seller to extract more revenue than a posted-price mechanism, we show that
an optimal mechanism is a simple one (like in Figure \ref{fig:illu} right triangle). Thus, our result says that
there is an upper bound (achieved by our optimal mechanism) on the revenue extracted
by the seller in these decision making environments.

\subsection{Layout of the paper}

The rest of the paper is organized as follows. Section \ref{sec:model} describes the model.
Section \ref{sec:opt} has our main results characterizing the optimal mechanism. Section \ref{sec:lit}
relates our model and results to the literature. Finally, we end with some discussions of our modeling
assumptions in Section \ref{sec:disc}. All our missing proofs and the discussions on our revelation principle
are in an appendix at the end.

\section{The model}
\label{sec:model}

We now formally introduce our model. There is a single object which is sold by a seller to a firm.
An agent and her manager are the decision-makers for the firm. We describe the preferences of the agent and the manager
and the sequence of decison-making in our model. We index the agent by $1$ and the manager by $2$ in
the paper.

{\bf Preferences.} Let $Z$ be the set of all outcomes, i.e., $Z:=\{(a,t): a \in [0,1], t \in \mathbb{R}\}$, where
$a$ is allocation probability and $t$ is the payment made to the seller.
Let $b$ be the budget of the agent. That is, the agent cannot pay more than $b$. Throughout the paper, we will assume that $b$ is
common knowledge among the agent, the manager and the seller.
For every $X \subseteq Z$, define $X_b:=\{(a,t) \in X: t \le b\}$ to be the {\sl budget set} of
the agent.

We now introduce some notation to define how the manager-agent pair chooses
outcomes from a given set of outcomes. Let $v_i \in [0,\beta]$ be any valuation, where $\beta \in \mathbb{R}_{++}$.
Define for every valuation $v_i$ and every $X \subseteq Z$,
\begin{align*}
Ch(X;v_i) &= \{(a,t) \in X: v_i a - t \ge v_i a' - t'~\forall~(a',t') \in X\}
\end{align*}
Note that $Ch$ is a standard choice correspondence satisfying independence of irrelevant alternatives. That is, if
$X' \subseteq X$ and $Ch(X;v_i) \subseteq X'$, then $Ch(X';v_i)=Ch(X;v_i)$.

For any pair of sets of outcomes $X,Y \subseteq Z$, we say
\begin{align*}
X \unrhd_{v_i} Y~\textrm{if}~v_i a - t \ge v_ia' - t'\qquad~\forall~(a,t) \in X,~\forall~(a',t') \in Y
\end{align*}
We write $X \rhd_{v_i} Y$ if the above inequality is strict at least for one pair of outcomes.
If $X \unrhd_{v_i} Y$ and $Y \unrhd_{v_i} X$,  then we say $X \sim_{v_i} Y$. Note that for any $v_i \in [0,\beta]$,
the relations $\rhd_{v_i}$ and $\unrhd_{v_i}$ are transitive but incomplete. These relations will be
used to define choices of our manager-agent pair.

The type is a pair of valuations $v \equiv (v_1,v_2)$, where
$v_1$ is the value of the agent and $v_2$ is the value of the manager.
We assume that the value of the agent is higher than that of the manager. So, the type space is
\begin{align*}
V &= \{(v_1,v_2) \in [0,\beta]^2: v_1 \ge v_2\}
\end{align*}

For a vector of valuations $v = (v_1,v_2)$, the choice from
any $X \subseteq Z$ is given by
\begin{align*}
Ch(X;v) =
\begin{cases}
Ch(X;v_2) &~\textrm{if}~Ch(X;v_2) \rhd_{v_1} Ch(X_b;v_1) \\
Ch(X_b;v_1) &~\textrm{otherwise}
\end{cases}
\end{align*}
Note that if $Ch(X;v_2) \rhd_{v_1} Ch(X_b;v_1)$ is not true, then either $Ch(X;v_2) \sim_{v_1} Ch(X_b;v_1)$ or there exists an outcome $(a',t') \in Ch(X;v_2)$
such that $v_1 a - t > v_1 a' - t'$ for all $(a,t) \in Ch(X_b;v_1)$.\footnote{In the above definition $Ch(X;v)$ refers to a two-dimensional $v \equiv (v_1,v_2)$
but the second argument of $Ch(X;v_2)$ (and $Ch(X_b;v_1)$) is one-dimensional. This abuse of notation
saves us from introducing a new notation.}

The choice correspondence $Ch(X;v)$ describes how choice is made by the manager-agent pair
from a set of outcomes $X$. In particular, if $v$ is the type vector, then the agent first evaluates
her choice correspondence from the budget set: $Ch(X_b;v_1)$. She then evaluates the (unconstrained) choice
correspondece of the manager: $Ch(X;v_2)$. The evaluation of $Ch(X;v_2)$ critically uses the
fact that the agent knows the value $v_2$ of the manager. Now, the agent compares
$Ch(X;v_2)$ and $Ch(X_b;v_1)$ using her own value $v_1$. If the unconstrained correspondence of the
manager is strictly better than the constrained correspondence of the agent according to $\rhd_{v_1}$,
she approaches the manager for decision-making.\footnote{We could allow 
the agent to approach the manager when the unconstrained correspondence of 
the manager is {\sl weakly better} than her constrained correspondence. Though this
complicates the analysis, the conclusions of the paper do not change.} In that case, we set $Ch(X;v)$ equal to
$Ch(X;v_2)$. Else, the agent makes the decision and her constrained choice correspondence $Ch(X_b;v_1)$
equals $Ch(X;v)$. The choice correspondence $Ch(X;v)$ denotes how choices are made in our
model. It is not difficult to see that $Ch(X;v)$ need not satisfy independence of irrelevant alternatives.

{\bf Mechanism.} A mechanism is a pair $(q,p)$, where $q:V \rightarrow [0,1]$ is the allocation rule
and $p:V \rightarrow \mathbb{R}$ is the payment rule. The range of the mechanism is $R(q,p):=\{(q(v),p(v)): v \in V\}$.
A mechanism $(q,p)$ is {\bf incentive compatible (IC)} if
\begin{align*}
\big(q(v),p(v)\big) &\in Ch(R(q,p);v)~\qquad~\forall~v \in V
\end{align*}
This notion of IC is standard in the literature of behavioral mechanism design~\citep{De14}.
Since the choice correspondence $Ch$ may fail independence of irrelevant alternatives,
it is not clear if restricting attention to direct mechanisms is without loss of generality.
For instance, \citet{Sa11} shows that the revelation principle may fail in models with {\sl behavioral} agents.
In Appendix \ref{sec:rev}, we show that a version of the revelation principle holds in our setting, and hence, it
is without loss of generality to focus attention to such direct mechanisms.

A mechanism $(q,p)$ is {\bf individually rational (IR)} if $(0,0) \in R(q,p)$.

It is important to emphasise the timing of the game induced by the direct mechanism.
\begin{itemize}

\item The seller announces the mechanism $(q,p)$. Let $X \equiv R(q,p)$ be its range.

\item The agent learns her value $v_1$ and the manager's value $v_2$.

\item If $Ch(X;v_2) \rhd_{v_1} Ch(X_b;v_1)$, then

\begin{itemize}

\item the agent approaches the manager for decision-making

\item the manager learns his type $v_2$ and chooses an outcome in $Ch(X;v_2)$

\end{itemize}

\item If $Ch(X;v_2) \rhd_{v_1} Ch(X_b;v_1)$ is {\sl not true}, then the agent chooses an outcome in $Ch(X_b;v_1)$.

\end{itemize}

As highlighted earlier in the introduction, there are couple of features of
this model that make it tractable. First, the agent learns her own value and that of the
manager. This can be interpreted in two ways. In the first interpretation, the agent knows all possible uses
of the object to the firm but also knows the possible uses of the object that the manager (board of directors)
knows. An alternate interpretation is that the value of the manager $v_2$ is the
actual value of the object to the firm, which is common knowledge between the
manager and the agent. But the agent gets additional payoff (for instance, reputation payoffs)
from acquiring the object which is privately known to her.
For instance, suppose the object has a value
$v_2$ to the firm. Then, the manager gets a payoff of $\alpha_2(v_2-p)$ by acquiring
the object at price $p$, where $\alpha_2$ is the share of the manager in the firm.
Thus, $v_2$ will be the value of the object to the manager. The agent gets a payoff
of $\alpha_1 (v_2-p) + \delta_1$, where $\alpha_1$ is the share of the agent in the firm
and $\delta_1$ is the additional payoff of the agent. Hence, we can
interpret $v_1:=v_2 + \frac{\delta_1}{\alpha_1}$ as the value to the agent.

Second, there is no formal contract between the manager and the agent about how the
agent will behave in the mechanism. Models studied in \citet{Burkett15,Burkett16} look at
such contracts between the manager and the agent given a mechanism. In our model, the
manager is happy to let the agent make decisions in the mechanism as long as the liquidity
constraint of the firm is not violated.

Finally, the only communication between the manager and the agent in our model is when the
agent approaches the manager for decision-making. In the interpretation where $v_1$ and $v_2$
are assessments of values by the agent and the manager respectively, and the agent knows
both of them, the agent does not {\sl persuade} the manager about the value of the object to
the firm. That is, there is no deliberation between the manager and the agent about the actual
value of the object to the firm. Models of mechanism design with such a communication phase is
studied in \citet{MT19}. In the second interpretation, where $v_2$ is treated as the
actual value of the object to the firm and $v_1-v_2$ is the additional private value of the agent,
our model assumes that the manager cannot question the decision making of the agent ex-post.

{\bf Prior.} The joint density function of $v \equiv (v_1,v_2)$ is
$f$ with support $V$. As the support of $v_2$ depends on the realized value of
$v_1$ (since each $v$ satisfies $v_1 \ge v_2$), the values $v_1$ and $v_2$ are not independent. We will denote the cdf of
$v$ as $F$. For any IC and IR mechanism $(q,p)$, the {\bf expected revenue}
is given by
\begin{align*}
\textsc{Rev}(q,p) &= \int \limits_V p(v)f(v)dv
\end{align*}

A mechanism $(q^*,p^*)$ is {\bf optimal} if it is IC and IR and for every other
IC and IR mechanism $(q,p)$, we have $\textsc{Rev}(q^*,p^*) \ge \textsc{Rev}(q,p)$: it is {\bf strictly optimal} if this inequality is strict for every other IC and IR mechanism.

We will denote the {\sl marginal distribution} of
agent's valuation as $F_1$ (which admits a density $f_1$) and manager's valuation as $F_2$ (which admits a density $f_2$).
The density functions $f_1$ and $f_2$ will be assumed to be positive and differentiable in $V$.

\section{Optimal mechanism}
\label{sec:opt}

We describe our main results in this section.
We start by describing two simple classes of mechanisms. Our main result will show
that the optimal mechanism belongs to one of these classes of mechanisms.
The first class contains posted-price
mechanisms for the agent.
\begin{defn}
A mechanism $(q,p)$ is a \textsc{post-1} mechanism if there exists $\kappa_1 \in [0,b]$ such that
for every $v \in V$, we have
\begin{align*}
(q(v),p(v)) =
\begin{cases}
(0,0) & \textrm{if}~v_1 \le \kappa_1 \\
(1,\kappa_1) & \textrm{otherwise}
\end{cases}
\end{align*}
\end{defn}
A {\sc post-1} mechanism is IC and IR. It only considers valuation of the agent.
The following class of mechanisms considers the valuations of the agent and the manager.
\begin{defn}
A mechanism $(q,p)$ is a \textsc{post-2} mechanism if there exist $\kappa_1,\kappa_2 \in [b,\beta]$ such that
$\kappa_1 \le \kappa_2$ and for every $v \in V$, we have
\begin{align*}
(q(v),p(v)) =
\begin{cases}
(0,0) & \textrm{if}~v_1 < \kappa_1 \\
\Big(1,b + \kappa_2 \big(1-\frac{b}{\kappa_1}\big)\Big) & \textrm{if}~v_2 > \kappa_2\\
\big(\frac{b}{\kappa_1},b\big) & \textrm{otherwise}
\end{cases}
\end{align*}
\end{defn}

\begin{figure}
  \centering
\includegraphics[width=3in]{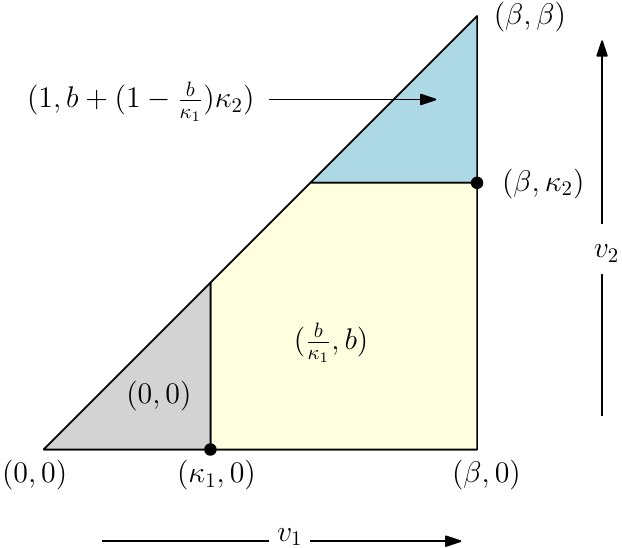}
\caption{A {\sc post-2} mechanism}
\label{fig:post2main}
\end{figure}
Figure \ref{fig:post2main} shows a {\sc post-2} mechanism.
A {\sc post-2} mechanism allows for randomization only when $v_2 \leq \kappa_2$ and $v_1 \ge \kappa_1$.
In this region the object is allocated with a {\sl fixed} probability $\frac{b}{\kappa_1}$.
This ensures that the range of a {\sc post-2} mechanism contains at most three outcomes.
The following lemma (proof is in Appendix \ref{sec:ppost2}) establishes that a {\sc post-2} mechanism is IC and IR.
\begin{lemma}
\label{lem:post2}
A {\sc post-2} mechanism is IC and IR.
\end{lemma}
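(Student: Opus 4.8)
The plan is to verify directly that a \textsc{post-2} mechanism satisfies the two defining properties: individual rationality is immediate, and incentive compatibility amounts to checking, for each type $v$, that the prescribed outcome $(q(v),p(v))$ lies in $Ch(R(q,p);v)$. Write the range as $X = R(q,p) = \{O_0, O_1, O_2\}$ where $O_0 = (0,0)$, $O_1 = \big(\frac{b}{\kappa_1}, b\big)$, and $O_2 = \big(1, b + \kappa_2(1-\frac{b}{\kappa_1})\big)$. IR follows since $O_0 \in X$. For IC, the key structural observation is that since $b \le \kappa_1 \le \kappa_2$, both $O_1$ and $O_2$ have payment at least $b$, so the budget set $X_b$ equals $\{O_0\}$ when the inequalities are strict (and one should check the boundary case $\kappa_1 = b$, where $O_1 = (1,b)$ is itself in the budget set — this needs a separate, easy check). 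Thus for generic parameters the agent's constrained choice is always $Ch(X_b; v_1) = \{O_0\}$, which means the decision rule collapses to: the pair ends up at $Ch(X; v_2)$ whenever $Ch(X; v_2) \rhd_{v_1} \{O_0\}$, i.e.\ whenever the principal strictly prefers something over $O_0$ and the agent also (weakly, with the strictness bookkeeping from the definition) agrees; otherwise the outcome is $O_0$.

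The main work is then a case analysis over the three regions of $V$. First I would record the relevant thresholds for the principal: comparing $O_0$ with $O_1$ for the principal, $v_2 \cdot \frac{b}{\kappa_1} - b \ge 0 \iff v_2 \ge \kappa_1$; comparing $O_0$ with $O_2$, $v_2 - b - \kappa_2(1-\frac{b}{\kappa_1}) \ge 0 \iff v_2 \ge \kappa_2$ (using $1 - \frac{b}{\kappa_1} = \frac{\kappa_1 - b}{\kappa_1}$ for the algebra); and comparing $O_1$ with $O_2$ for the principal, the net value is $(1 - \frac{b}{\kappa_1})(v_2 - \kappa_2)$, so the principal prefers $O_2$ to $O_1$ iff $v_2 \ge \kappa_2$. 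The analogous computations for the agent replace $v_2$ by $v_1$. Region $v_1 < \kappa_1$: then $v_1 < \kappa_1 \le \kappa_2$, so the agent strictly prefers $O_0$ to both $O_1$ and $O_2$ (the net values $\frac{b}{\kappa_1}(v_1 - \kappa_1)$ and $(1-\frac{b}{\kappa_1})(v_1 - \kappa_2)$ plus the $O_0$-vs-$O_1$ comparison are all negative), and since $v_2 \le v_1 < \kappa_1$ the principal also strictly prefers $O_0$; hence $Ch(X;v_2) = Ch(X_b;v_1) = \{O_0\}$, so $Ch(X;v_2) \rhd_{v_1} Ch(X_b;v_1)$ fails and the outcome is $O_0 = (q(v),p(v))$, as required. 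Region $v_1 \ge \kappa_1$, $v_2 > \kappa_2$: here the principal strictly prefers $O_2$ to both $O_0$ and $O_1$, so $Ch(X;v_2) = \{O_2\}$; and for the agent, $v_1 \ge \kappa_1$ combined with $v_1 \ge v_2 > \kappa_2$ gives that $O_2$ is weakly preferred by the agent to $O_0$ (indeed $v_1 - p(O_2) = (1-\frac{b}{\kappa_1})(v_1 - \kappa_2) + \frac{b}{\kappa_1}(v_1-\kappa_1) \ge 0$), so the $\rhd_{v_1}$-comparison of $\{O_2\}$ against $\{O_0\}$ goes through, the agent approaches the principal, and the outcome is $O_2 = (q(v),p(v))$. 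Region $v_1 \ge \kappa_1$, $v_2 \le \kappa_2$ is the delicate one, mirroring the illustration in the introduction.

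In that last region the principal's best outcome in $X$ is either $O_0$ (if $v_2 < \kappa_1$) or $O_1$ (if $\kappa_1 \le v_2 \le \kappa_2$), and in either case $O_1 \unrhd_{v_2} O_2$; meanwhile $Ch(X_b;v_1) = \{O_0\}$. One must show $Ch(X;v_2) \rhd_{v_1} \{O_0\}$ fails, so that the outcome is $O_0$ — wait, but the mechanism prescribes $O_1$ here, not $O_0$. This is exactly the subtlety: when $\kappa_1 < b$ is impossible (we have $\kappa_1 \ge b$), so if $\kappa_1 = b$ then $O_1 = (1,b) \in X_b$ and then $Ch(X_b;v_1) = \{O_1\}$ for $v_1 \ge \kappa_1 = b$, delivering $O_1$ directly; and if $\kappa_1 > b$ one has to re-examine whether the definition of $Ch(X;v)$ actually yields $O_1$. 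The correct reading is that when $v_1 \ge \kappa_1 > b$ and $v_2 \le \kappa_2$, the agent compares her budget-feasible best $\{O_0\}$ with the principal's choice; since the agent strictly prefers $O_1$ to $O_0$ (as $\frac{b}{\kappa_1}(v_1-\kappa_1) \ge 0$) but $O_1$ is out of budget, and the principal's choice is $O_1$ (when $v_2 \ge \kappa_1$) — so the agent approaches the principal and gets $O_1$; or the principal's choice is $O_0$ (when $v_2 < \kappa_1$), in which case $Ch(X;v_2) = \{O_0\} = Ch(X_b;v_1)$ so the $\rhd_{v_1}$ condition fails and the agent picks from her budget set, getting $O_0$ — but the mechanism says $O_1$! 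I expect the resolution is that the definition of \textsc{post-2} implicitly assumes the prior/parameters are such that this sub-case $v_1 \ge \kappa_1 > v_2$ is handled by the budget set actually containing $O_1$, i.e.\ that the intended regime has the fractional outcome priced at exactly $b$, which it is. So the honest accounting is: $O_1 = (\frac{b}{\kappa_1}, b)$ has payment exactly $b \le b$, hence $O_1 \in X_b$ always; therefore $X_b = \{O_0, O_1\}$ (not just $\{O_0\}$), and for $v_1 \ge \kappa_1$ the agent's constrained optimum is $Ch(X_b;v_1) = \{O_1\}$ since $\frac{b}{\kappa_1} v_1 - b \ge 0 = $ value of $O_0$. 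This corrects the earlier simplification and makes the last region work cleanly: the agent's budget-best is $O_1$, the principal prefers nothing strictly better to the agent (one checks $\{O_2\} \rhd_{v_1} \{O_1\}$ fails because $v_1 \ge v_2$ would be needed to exceed $\kappa_2$ but we could still have $v_1 > \kappa_2 \ge v_2$... ) — so the genuinely hard step, and the one I would spend the most care on, is verifying in the sub-region $\kappa_1 \le v_1$, $v_2 \le \kappa_2 < v_1$ that $Ch(X;v_2) \rhd_{v_1} Ch(X_b;v_1) = \{O_1\}$ does \emph{not} hold, i.e.\ that the agent does not want to approach the principal to grab $O_2$: since $Ch(X;v_2) \subseteq \{O_0, O_1\}$ when $v_2 \le \kappa_2$ (the principal never strictly prefers $O_2$), and both $O_0$ and $O_1$ are $\unrhd_{v_1}$-dominated-or-equal relative to $O_1$ from the agent's side ($O_1 \unrhd_{v_1} O_0$), we get $Ch(X_b;v_1) = \{O_1\} \unrhd_{v_1} Ch(X;v_2)$, so the strict condition fails and the outcome is the agent's budget-best $O_1 = (q(v),p(v))$. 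Assembling these cases completes the verification of IC; IR was already noted.
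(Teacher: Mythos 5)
Your proof is essentially the paper's own argument: a direct case analysis over the three regions $v_1<\kappa_1$; $v_1\ge\kappa_1,\ v_2>\kappa_2$; and $v_1\ge\kappa_1,\ v_2\le\kappa_2$, using the observation that $(\tfrac{b}{\kappa_1},b)$ has payment exactly $b$ and hence lies in the budget set. The substance is correct, but two loose ends remain after your mid-proof correction of $X_b$ from $\{O_0\}$ to $\{O_0,O_1\}$. First, you never re-verify the region $v_2>\kappa_2$ against the corrected budget set: there $Ch(X_b;v_1)=\{O_1\}$, so the condition to check is $O_2\rhd_{v_1}O_1$, i.e.\ $\bigl(1-\tfrac{b}{\kappa_1}\bigr)(v_1-\kappa_2)>0$, not a (weak!) comparison of $O_2$ against $O_0$ --- note that a merely weak preference would not even trigger the $\rhd_{v_1}$ clause. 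The needed strict inequality does follow from $v_1\ge v_2>\kappa_2$ and $\kappa_1>b$, and is implicit in the decomposition you display, but the step as written compares against the wrong set. Second, on the boundary $v_2=\kappa_2\ge\kappa_1$ your claim that $Ch(X;v_2)\subseteq\{O_0,O_1\}$ fails: the principal is exactly indifferent between $O_1$ and $O_2$ there, so $Ch(X;v_2)=\{O_1,O_2\}$, and when additionally $v_1>\kappa_2$ the condition $Ch(X;v_2)\rhd_{v_1}Ch(X_b;v_1)$ in fact \emph{holds}; IC is still satisfied only because $O_1\in Ch(X;v_2)=Ch(X;v)$, which is how the paper handles this case. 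Neither issue requires a new idea, but both need to be closed for the verification to be complete.
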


The main result of the paper is the following.
\begin{theorem}
  \label{theo:main}
Suppose $x(1-F_1(x))$ is strictly concave. Then, there exists an optimal mechanism which is either a {\sc post-1} or a {\sc post-2} mechanism.
\end{theorem}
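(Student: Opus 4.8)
The plan is to characterize the structure of any IC and IR mechanism enough to reduce the seller's optimization to a low-dimensional problem, and then solve that reduced problem directly. First I would establish what the range $R(q,p)$ of an optimal mechanism can look like. The key observation is that because the agent's budget set $X_b$ only contains outcomes with $t \le b$, any outcome in the menu with payment $\le b$ is a candidate choice for \emph{every} type via the agent's own choice $Ch(X_b;v_1)$, while outcomes with payment $> b$ can only be selected through the principal's channel, i.e. when $Ch(X;v_2) \rhd_{v_1} Ch(X_b;v_1)$. I would argue that WLOG the menu contains $(0,0)$ (IR), at most one ``cheap'' outcome with $t \le b$ that the agent buys on her own, and the expensive outcomes are used only by the principal. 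Standard incentive arguments (monotonicity of allocation in $v_1$ within the budget set, monotonicity in $v_2$ for the principal-selected outcomes) should force the cheap part to be a single posted price for the agent and pin down a threshold structure in $(v_1,v_2)$-space.

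Next I would reduce the expensive outcomes. The principal is unconstrained and single-dimensional in $v_2$, so among outcomes chosen through the principal's channel, a Mussa–Rosen/Myerson-type argument gives that we may restrict to a monotone menu; but crucially the agent must \emph{prefer} to delegate, i.e. $Ch(X;v_2) \rhd_{v_1} Ch(X_b;v_1)$ must hold on exactly the region where the expensive outcome is used and fail elsewhere. Writing out this delegation inequality $v_2 a'' - t'' \ge v_2 a' - t'$ (principal prefers expensive) together with $v_1 a'' - t'' \ge v_1 a' - t'$ or its negation (agent's comparison), and combining with $v_1 \ge v_2$, I expect to show that only \emph{one} expensive outcome can be sustained, and that it must be ``topped up'' from the cheap outcome: if the cheap outcome is $(a', b)$ with $a' = b/\kappa_1$, the expensive outcome has the form $(1, b + \kappa_2(1 - b/\kappa_1))$ for some $\kappa_2 \ge \kappa_1 \ge b$ — precisely the \textsc{post-2} form. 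The case where there is no cheap outcome above the budget collapses to \textsc{post-1} with price $\le b$. This step — showing the menu has at most three outcomes of exactly these shapes — is the main obstacle, since the non-IIA choice correspondence means I cannot lean on classical revelation-principle machinery and must track the delegation condition carefully region by region.

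Having reduced to the two families, I would write the expected revenue of a \textsc{post-2} mechanism with parameters $(\kappa_1,\kappa_2)$ as an explicit integral against $f$. Revenue is $b$ on $\{v_1 \ge \kappa_1, v_2 \le \kappa_2\}$ plus $b + \kappa_2(1 - b/\kappa_1)$ on $\{v_1 \ge \kappa_1, v_2 > \kappa_2\}$, i.e. $b(1 - F_1(\kappa_1)) + \kappa_2(1 - b/\kappa_1) \Pr[v_1 \ge \kappa_1, v_2 > \kappa_2]$. For \textsc{post-1} the revenue is $\kappa_1(1 - F_1(\kappa_1))$ with $\kappa_1 \le b$. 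I would then optimize: first-order conditions in $\kappa_2$, then in $\kappa_1$. The assumption that $x(1 - F_1(x))$ is strictly concave guarantees a unique unconstrained maximizer $b^*$ of the agent-side revenue, which is exactly where the two regimes switch. When $b \ge b^*$ the budget does not bind at the optimal agent price, so \textsc{post-1} at price $b^*$ is optimal; when $b < b^*$ the seller gains by the two-part structure, extracting $b$ from the low-$v_2$ types and topping up from high-$v_2$ types, and one shows the best such mechanism beats every \textsc{post-1}. Comparing the optimized values of the two families and invoking the threshold $b^*$ finishes the proof.

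I expect the delegation-region bookkeeping in the second paragraph to be where essentially all the work lies; the revenue optimization in the last paragraph is routine calculus once the reduction is in hand, and Lemma~\ref{lem:post2} already supplies the ``if'' direction (that these mechanisms are admissible).
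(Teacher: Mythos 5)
There is a genuine gap, and it sits exactly where you predict ``essentially all the work lies'': the reduction of an arbitrary IC and IR mechanism to a three-outcome menu. You propose to derive this menu structure from the incentive and delegation inequalities alone --- ``only one cheap outcome,'' ``only one expensive outcome can be sustained.'' That claim is false as a statement about incentive compatibility. Below the budget the agent is a standard quasilinear buyer choosing from $X_b$, so a menu with a continuum of cheap outcomes screening $v_1$ is perfectly IC; likewise, above the budget the principal is a standard quasilinear buyer, so a rich menu of expensive outcomes screening $v_2$ is also IC (your own delegation inequality $v_1a''-t''>v_1a'-t'$ follows from $v_2a''-t''\ge v_2a'-t'$, $t''>t'$ and $v_1\ge v_2$, as in the paper's Lemma \ref{lem:p4}, so delegation happens automatically for all high-$v_2$ types --- it does not restrict the number of expensive outcomes). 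The collapse to one cheap and one expensive outcome is an \emph{optimality} fact, not an IC fact, and proving it requires exactly the machinery you hope to avoid: the paper fixes the threshold $\kappa$ in $v_2$ above which payments exceed $b$ (Lemmas \ref{lem:p3} and \ref{lem:p7}), splits $V$ into three regions, and revenue-dominates the mechanism region by region. The strict concavity of $x(1-F_1(x))$ is used precisely in the low region (via $v_1 f_1'(v_1)+2f_1(v_1)\ge 0$ in the envelope expression (\ref{eq:el1})) to show the constrained screening problem there is solved by a single outcome $(q^\dag,b)$; the high region needs an extreme-point argument \`a la Manelli--Vincent with allocations constrained to $[q^\dag,1]$ and the utility floor $\kappa q^\dag-b$. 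Your sketch never identifies the linking quantity $q^\dag$ (the allocation that makes type $(\kappa,0)$ indifferent to paying $b$), which is what forces the cheap outcome to be $(b/\kappa_1,b)$ and the expensive one to be the ``top-up'' $(1,b+\kappa_2(1-b/\kappa_1))$; without it the specific \textsc{post-2} shape does not emerge.

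Two smaller points. First, you assert the cheap outcome has payment exactly $b$; that too is an optimization conclusion, not an IC one. Second, your final optimization paragraph is essentially the content of Theorem \ref{theo:post2}, which the paper proves separately \emph{after} Theorem \ref{theo:main}; it is fine, but it cannot substitute for the missing reduction. As written, the proposal establishes admissibility of the two families (already Lemma \ref{lem:post2}) and compares them, but does not show that no other IC and IR mechanism does better, which is the entire content of the theorem.
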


\noindent {\sc Proof Sketch.} The proof is quite involved and is presented in the Appendix \ref{sec:pmain}. We give a brief sketch here.
If we optimize over the class of mechanisms in which
all types pay less than $b$, then standard techniques lead to the optimality of a {\sc post-1} mechanism.
In the other case, we start with an arbitrary IC and IR mechanism in which some types pay more than $b$.
We divide the type space into three regions (see Figure \ref{fig:fp1} in the Appendix \ref{sec:pmain}) and construct
the restriction of the mechanism into each of these three regions. We provide upper bounds
on the revenue of the mechanism in each of these three regions. Finally, we put together
these upper bounds and show that a {\sc post-2} mechanism achieves this upper bound. The method of
coming up with this partition of the type space into three parts is somewhat involved. It is
described in detail in Appendix \ref{sec:pmain}.

Theorem \ref{theo:main}, and all subsequent results, will maintain the distributional assumption that
$x(1-F_1(x))$ is strictly concave. Notice that $x(1-F_1(x))$ is the expected revenue of the seller
to sell the object to the (unconstrained) agent (with marginal distribution $F_1$) at a posted price $x$.
This condition implies single-crossing of virtual value condition used in the single object auction
design literature. Since it is only a condition on the marginal distribution of
values of the agent, it allows for correlation among the values of the agent and the manager.

When the budget is high, the seller must expect that the agent takes decisions at most of the types.
In that case, a {\sc post-1} mechanism may be optimal. Similarly, if the budget is low, the decision-making
is given to the manager for more types. So, a {\sc post-2} mechanism may be optimal.
The next theorem shows that this intuition holds in general and provides a precise threshold on budget below which
{\sc post-2} is optimal and above which {\sc post-1} is optimal.

For each $i \in \{1,2\}$, let $\widetilde{\kappa}_i$ be defined as

\begin{align}\label{eq:kappai}
\widetilde{\kappa}_i := \arg \max_{x \in [0,\beta]} x(1-F_i(x))
\end{align}
This maximizer will be unique under appropriate conditions on $F_i$.
In particular, when $x(1-F_1(x))$ is strictly concave, $\widetilde{\kappa}_1$ is uniquely defined. We show that
$\widetilde{\kappa}_1$ is the threshold budget level, below which the optimal mechanism is a
{\sc post-2} mechanism and above which the optimal mechanism is a {\sc post-1} mechanism.

\begin{theorem}
\label{theo:post2}
Suppose $x(1-F_1(x))$ is strictly concave. Then, the following statements hold.
\begin{enumerate}
\item If $b \le \widetilde{\kappa}_1$, a {\sc post-2} mechanism is optimal.

\item If $b > \widetilde{\kappa}_1$, a {\sc post-1} mechanism is strictly optimal.
\end{enumerate}
\end{theorem}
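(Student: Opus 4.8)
The plan is to leverage Theorem \ref{theo:main}, which already guarantees that some optimal mechanism is either \textsc{post-1} or \textsc{post-2}. So the task reduces to comparing, over the two parametric families, the best achievable expected revenue, and showing that the location of the maximum of $x(1-F_1(x))$ governs which family wins. First I would write down the revenue of each mechanism as a clean one-parameter (for \textsc{post-1}) or two-parameter (for \textsc{post-2}) optimization problem. For a \textsc{post-1} mechanism with price $\kappa \le b$, the object sells exactly when $v_1 \ge \kappa$, so $\textsc{Rev} = \kappa(1-F_1(\kappa))$, and the best \textsc{post-1} revenue over $\kappa \in [0,b]$ is $\max_{\kappa \le b}\kappa(1-F_1(\kappa))$. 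By strict concavity of $x(1-F_1(x))$ and uniqueness of $\widetilde\kappa_1$, this is $\widetilde\kappa_1(1-F_1(\widetilde\kappa_1))$ when $b \ge \widetilde\kappa_1$, and $b(1-F_1(b))$ when $b < \widetilde\kappa_1$ (since the function is increasing on $[0,\widetilde\kappa_1]$).

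Next I would compute the revenue of a \textsc{post-2} mechanism with thresholds $\kappa_1 \le \kappa_2$, both in $[b,\beta]$. Using the prior $f$ with marginals $F_1,F_2$: when $v_1 \ge \kappa_1$ and $v_2 \le \kappa_2$ the payment is $b$, and when $v_1 \ge \kappa_1$ and $v_2 > \kappa_2$ the payment is $b + \kappa_2(1-\tfrac{b}{\kappa_1})$. Note that $v_1 \ge \kappa_1 \ge b \ge v_2$ would force the region $\{v_1 \ge \kappa_1, v_2 > \kappa_2\}$ into the part of $V$ with $v_2 > \kappa_2 \ge b$; care is needed about whether this region has positive measure given the constraint $v_1 \ge v_2$, but since $\kappa_2 \le \beta$ and the density is positive on $V$, it generically does. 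Writing $G(\kappa_1,\kappa_2)$ for the probability mass of $\{v_1\ge\kappa_1, v_2>\kappa_2\}$, the \textsc{post-2} revenue is
\begin{align*}
\textsc{Rev} = b\,(1-F_1(\kappa_1)) + \kappa_2\Big(1-\tfrac{b}{\kappa_1}\Big)\,G(\kappa_1,\kappa_2).
\end{align*}
The key structural observation is that the \emph{first} term, $b(1-F_1(\kappa_1))$, is bounded above by $b(1-F_1(b))$ because $\kappa_1 \ge b$ and $1-F_1$ is decreasing; and the \emph{second} term is a nonnegative bonus that vanishes as $\kappa_1 \to b$ only if we also let it — so one must show the bonus term can be made strictly positive while keeping the first term close to its bound. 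That tension is where the threshold $\widetilde\kappa_1$ enters.

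The heart of the argument, and the main obstacle, is part (1): showing that when $b < \widetilde\kappa_1$ the best \textsc{post-2} revenue strictly exceeds the best \textsc{post-1} revenue $b(1-F_1(b))$. The clean idea is to take $\kappa_1$ slightly above $b$ (so $1-\tfrac{b}{\kappa_1}>0$ but small) and $\kappa_2$ chosen well inside $[\kappa_1,\beta]$; then the first term is $b(1-F_1(\kappa_1)) = b(1-F_1(b)) - O(\kappa_1-b)$, losing only first order in $(\kappa_1-b)$, while the bonus term is $\kappa_2(1-\tfrac{b}{\kappa_1})G(\kappa_1,\kappa_2)$, which is also first order in $(\kappa_1-b)$ but with a positive coefficient $\tfrac{\kappa_2 G(b,\kappa_2)}{b}$ (roughly). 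So the comparison is a first-order-in-$(\kappa_1-b)$ race: I would show the marginal gain from the bonus strictly dominates the marginal loss from $F_1$. Concretely, the derivative of the first term at $\kappa_1 = b$ is $-b f_1(b)$, and the derivative of the bonus at $\kappa_1 = b$ is $\tfrac{1}{b}\kappa_2 G(b,\kappa_2)$ plus lower-order terms; so I need $\tfrac{\kappa_2 G(b,\kappa_2)}{b} > b f_1(b)$ for some admissible $\kappa_2$. Here I would invoke $b < \widetilde\kappa_1$: strict concavity gives that $x(1-F_1(x))$ is strictly increasing at $b$, i.e.\ $(1-F_1(b)) - b f_1(b) > 0$, so $b f_1(b) < 1-F_1(b)$; and choosing $\kappa_2$ near $b$ makes $G(b,\kappa_2)$ close to $\Pr[v_1 \ge b, v_2 > b] = 1 - F_1(b) - \Pr[v_2 \le b < v_1]$, which — combined with $\kappa_2/b$ close to $1$ — one can show exceeds $b f_1(b)$. (If the geometry of $V$ makes $G(b,b)$ too small, an alternative is to optimize $\kappa_2$ globally rather than sending it to $b$; I would present whichever bound is cleanest, possibly needing to handle the corner where $v_2$'s support is genuinely restricted by $v_1 \ge v_2$.)

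Finally, part (2): when $b \ge \widetilde\kappa_1$, the best \textsc{post-1} revenue is exactly $\widetilde\kappa_1(1-F_1(\widetilde\kappa_1))$, the unconstrained optimum of $x(1-F_1(x))$. For \textsc{post-2}, the first term satisfies $b(1-F_1(\kappa_1)) \le b(1-F_1(b)) \le \widetilde\kappa_1(1-F_1(\widetilde\kappa_1))$ since $b \ge \widetilde\kappa_1$ puts $b$ on the decreasing branch — wait, more carefully $b(1-F_1(b)) \le \max_x x(1-F_1(x)) = \widetilde\kappa_1(1-F_1(\widetilde\kappa_1))$ automatically. Then I must bound the bonus term: $\kappa_2(1-\tfrac{b}{\kappa_1})G(\kappa_1,\kappa_2) \le \kappa_2(1-\tfrac{b}{\kappa_1})\Pr[v_1 \ge \kappa_1]$ and I would combine it with the first term to get $b(1-F_1(\kappa_1)) + \kappa_2(1-\tfrac{b}{\kappa_1})(1-F_1(\kappa_1)) = \big(b + \kappa_2(1-\tfrac{b}{\kappa_1})\big)(1-F_1(\kappa_1))$ — but $b + \kappa_2(1-\tfrac{b}{\kappa_1})$ is the top payment, call it $P \ge \kappa_1$ (since $\kappa_2 \ge \kappa_1$ gives $b + \kappa_2(1-\tfrac b{\kappa_1}) \ge b + \kappa_1 - b = \kappa_1$), so this is at most $\max_{x \ge \kappa_1} x(1-F_1(x)) \le \widetilde\kappa_1(1-F_1(\widetilde\kappa_1))$. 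Hence \textsc{post-2} never beats the best \textsc{post-1}, and since $G \le \Pr[v_1\ge\kappa_1]$ with slack in general, the inequality is strict unless the \textsc{post-2} degenerates; combined with Theorem \ref{theo:main} this gives optimality of \textsc{post-1}. The only delicate point in (2) is justifying $G(\kappa_1,\kappa_2) \le 1-F_1(\kappa_1)$, which is immediate since $\{v_1\ge\kappa_1,v_2>\kappa_2\}\subseteq\{v_1\ge\kappa_1\}$, so this direction is the routine one.
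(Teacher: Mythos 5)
Your overall strategy --- invoke Theorem \ref{theo:main} and then compare the best revenues achievable within the two families --- is exactly the paper's, and your revenue formula is right (note $G(\kappa_1,\kappa_2)=\Pr[v_2>\kappa_2]=1-F_2(\kappa_2)$, since $\kappa_2\ge\kappa_1$ and $v_1\ge v_2$). But both halves of your comparison have genuine gaps. For part (1) you set out to prove that the best \textsc{post-2} revenue \emph{strictly} exceeds $b(1-F_1(b))$. That is more than the theorem asserts, and it is false in general: the paper's Theorem \ref{theo:uniform} (parts 2 and 3) exhibits conditions under which the optimal \textsc{post-2} mechanism has $\kappa_1=\kappa_2=b$ and merely ties with the optimal \textsc{post-1} mechanism. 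Your first-order argument runs into exactly this: the derivative of the uniform \textsc{post-2} revenue at $\kappa=b$ is $-bf_1(b)+(1-F_2(b))$, and $b<\widetilde{\kappa}_1$ only gives $bf_1(b)<1-F_1(b)$, which does \emph{not} imply $bf_1(b)<1-F_2(b)$ because $1-F_2(b)\le 1-F_1(b)$ (the stochastic-dominance inequality points the wrong way for your purpose). The correct, and much shorter, argument is pure weak dominance: the \textsc{post-2} mechanism with $\kappa_1=\kappa_2=b$ \emph{is} the posted-price-$b$ mechanism, which is the optimal \textsc{post-1} mechanism when $b<\widetilde{\kappa}_1$; hence the best \textsc{post-2} weakly dominates the best \textsc{post-1}, and Theorem \ref{theo:main} finishes the job.

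Part (2) has a harder error. You bound $G(\kappa_1,\kappa_2)\le 1-F_1(\kappa_1)$ and collapse the revenue to $P\,(1-F_1(\kappa_1))$ with $P=b+\kappa_2(1-b/\kappa_1)\ge\kappa_1$, then assert this is at most $\max_{x}x(1-F_1(x))$. That step does not follow: $P(1-F_1(\kappa_1))$ is not of the form $x(1-F_1(x))$ for any single $x$, and since $P\ge\kappa_1$ it is \emph{at least} $\kappa_1(1-F_1(\kappa_1))$ and can exceed the posted-price optimum. Concretely, on the uniform triangle ($F_1(x)=x^2$, $\widetilde{\kappa}_1=1/\sqrt{3}$, optimum $\approx 0.385$), take $b=0.58$, $\kappa_1=0.65$, $\kappa_2$ near $1$: then $P\approx 0.688$ and $P(1-F_1(\kappa_1))\approx 0.397>0.385$, so your upper bound on the \textsc{post-2} revenue is not itself below the \textsc{post-1} optimum and the comparison collapses. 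The fix is to keep the $\kappa_2$-dependence: bound $1-F_2(\kappa_2)\le 1-F_1(\kappa_2)$ (first-order stochastic dominance), so the revenue is at most $\frac{b}{\kappa_1}\,\kappa_1(1-F_1(\kappa_1))+(1-\frac{b}{\kappa_1})\,\kappa_2(1-F_1(\kappa_2))$, a convex combination of two values of $x(1-F_1(x))$; by strict concavity (or simply because each term is at most the unconstrained maximum) this is at most $\widetilde{\kappa}_1(1-F_1(\widetilde{\kappa}_1))$, which is attainable by a \textsc{post-1} mechanism since $\widetilde{\kappa}_1\le b$. This is precisely the paper's argument.
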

\begin{proof}
{\sc Proof of (1)}.
Since $b \le \widetilde{\kappa}_1$ and $x(1-F_1(x))$ is strictly
concave, the optimal {\sc post-1} mechanism charges $b$ (since by definition a {\sc post-1} mechanism cannot charge more than $b$) and allocates the object to all types $v$ with $v_1 \ge b$.
This generates an expected revenue of $b(1-F_1(b))$. Now, consider the {\sc post-2} mechanism with $\kappa_1=\kappa_2=b$.
This generates an expected revenue of $b(1-F_1(b))$. Hence, the optimal {\sc post-2} mechanism is an optimal mechanism.

\noindent {\sc Proof of (2)}. Suppose $b > \widetilde{\kappa}_1$. Consider an arbitrary {\sc post-2}
mechanism with prices $\kappa_1,\kappa_2$. Notice that $b \le \kappa_1 \le \kappa_2$. Expected revenue from this mechanism is
\begin{align}\label{eq:r0}
b (1-F_1(\kappa_1)) + (1-\frac{b}{\kappa_1})\kappa_2 (1-F_2(\kappa_2)) &\le \frac{b}{\kappa_1} \kappa_1(1-F_1(\kappa_1)) + (1-\frac{b}{\kappa_1})\kappa_2 (1-F_1(\kappa_2))
\end{align}
where the inequality follows since $F_1$ first-order stochastic dominates $F_2$ (since $v_1 \ge v_2$).

If $\kappa_1=b$, the RHS of (\ref{eq:r0}) is $b(1-F_1(b))$, which is the revenue of a {\sc post-1} mechanism with price $b$.
But the optimal {\sc post-1} mechanism with a price $\widetilde{\kappa}_1$ generates a strictly greater revenue due to strict concavity of $x(1-F_1(x))$.

Now assume $\kappa_1 > b$. 
This implies
the RHS of (\ref{eq:r0}) is convex combination of $\kappa_1(1-F_1(\kappa_1))$ and $\kappa_2(1-F_1(\kappa_2))$.
Define
\begin{align*}
\widehat{\kappa} &:= \frac{b}{\kappa_1} \kappa_1 + (1-\frac{b}{\kappa_1})\kappa_2
\end{align*}
By strict concavity of $x(1-F_1(x))$, we get that the RHS of (\ref{eq:r0}) is less than
\begin{align}\label{eq:r00}
\widehat{\kappa} (1-F_1(\widehat{\kappa})) &\le \widetilde{\kappa}_1 (1-F_1(\widetilde{\kappa}_1))
\end{align}
where the inequality follows from the definition of $\widetilde{\kappa}_1$. Since $\widetilde{\kappa}_1 < b$,
the RHS of (\ref{eq:r00}) is the expected revenue of the optimal {\sc post-1} mechanism.
This implies that optimal {\sc post-1} mechanism generates strictly higher expected revenue than the {\sc post-2}
mechanism.
\end{proof}

\subsection{Description of optimal {\sc post-2} mechanisms}

Theorem \ref{theo:post2} shows that below a threshold budget, a {\sc post-1} mechanism is optimal, and above the threshold, a {\sc post-2} mechanism is optimal. While the optimal {\sc post-1} mechanism can be characterized by a simple closed-form expression (by setting posted price equal to  $\min \{\widetilde{\kappa}_1,b\}$), this is not the case for the optimal {\sc post-2} mechanism. The objective of this section is to provide more clarity on the description of the optimal {\sc post-2} mechanism. We will be able to better describe the optimal {\sc post-2} mechanism under additional conditions on the primitives of the model.

To remind, a {\sc post-2} mechanism
is described by two cutoffs, $\kappa_1,\kappa_2$ such that $b \le \kappa_1 \le \kappa_2 \le \beta$.
We partition the class of {\sc post-2} mechanisms as follows since they result in different characterizations.
A {\sc post-2} mechanism $(\kappa_1,\kappa_2)$ is
\begin{itemize}
\item a {\bf uniform} {\sc post-2} mechanism if $b \le \kappa_1=\kappa_2 \le \beta$;

\item an {\bf interior} {\sc post-2} mechanism if $b < \kappa_1 < \kappa_2 < \beta$.\footnote{In the interior {\sc post-2} mechanism definition, we preclude the cases of $\kappa_1 = b$ (identical to uniform {\sc post-2} with $\kappa_1 = b$) and $\kappa_2 = \beta$ (which is never optimal).}
\end{itemize}

Our first result in this section identifies the conditions under which either a uniform or an interior {\sc post-2} mechanism is optimal whenever a {\sc post-2} mechanism is optimal. 
The result also provides a closed-form description of the cutoffs that characterize the optimal mechanism.

We say the marginal distribution $F_2$ of the value of the manager satisfies {\bf single crossing (SC)}
if $x - \frac{1-F_2(x)}{f_2(x)}$ crosses zero exactly once. 
Since $xf_2(x)-(1-F_2(x))$ is the negative of the derivative of $x(1-F_2(x))$, the point at which $x-\frac{1-F_2(x)}{f_2(x)}$ equals zero is $\widetilde{\kappa}_2$ (defined in (\ref{eq:kappai})).
We say joint distribution $F$ satisfies {\bf monotone likelihood ratio property (MLRP)} if
$\frac{f_1(x)}{f_2(x)}$ is increasing in $x$. The MLRP property implies that $F_1$ first-order stochastic-dominates
$F_2$ (which comes for free in our model). 
This implies that $\widetilde{\kappa}_1 > \widetilde{\kappa}_2$, a fact we use later.
The proofs of following theorems are in Appendix \ref{sec:3.1proofs}.

\begin{theorem}
\label{theo:th4}
Suppose $F$ satisfies the MLRP property, $x(1-F_1(x))$ is strictly concave, and $F_2$ satisfies SC.
Then, the following are true if the optimal {\sc post-2} mechanism
generates strictly higher expected revenue than the optimal {\sc post-1} mechanism.

\begin{enumerate}

\item If $\frac{f_1(\widetilde{\kappa}_2)}{f_2(\widetilde{\kappa}_2)} \le 1$, then the optimal mechanism is a uniform {\sc post-2} mechanism. The cutoff of the optimal mechanism $\kappa^*$ is such that $\kappa^* \geq \widetilde{\kappa}_2$ and is a solution to 
\begin{align}\label{eq:part1}
xf_2(x) - (1-F_2(x)) &= b \Big( f_2(x) - f_1(x)\Big)
\end{align}

\item If $\frac{f_1(\widetilde{\kappa}_2)}{f_2(\widetilde{\kappa}_2)} > 1$, then the optimal mechanism is an interior {\sc post-2} mechanism. The optimal mechanism $(\kappa^*_1,\kappa^*_2)$ is such that
\begin{align}\label{eq:k2star}
\kappa^*_2 &= \widetilde{\kappa}_2
\end{align}
and $\kappa^*_1$ is the solution to
\begin{align}\label{eq:k1star}
x^2f_1(x) &= \kappa^*_2(1-F_2(\kappa^*_2))
\end{align}
\end{enumerate}
\end{theorem}

Theorem \ref{theo:th4} says that if the optimal {\sc post-2} mechanism is {\bf strictly} better than the optimal {\sc post-1} mechanism, then a description of optimal {\sc post-2} mechanism is possible under additional conditions on the primitives of the model. In particular, it provides two cases under which either a uniform {\sc post-2} or an interior {\sc post-2} mechanism is optimal with a precise expression to compute the posted price(s). 

Theorem \ref{theo:th4} assumes that the optimal {\sc post-2} mechanism generates {\sl strictly}
higher expected revenue than the optimal {\sc post-1} mechanism.
Note that Theorem \ref{theo:post2} only showed that if
$b \le \widetilde{\kappa}_1$, optimal {\sc post-2} mechanism generates {\sl weakly} higher payoff
than the optimal {\sc post-1} mechanism.
Therefore, it is unclear at what values of $b$ Theorem \ref{theo:th4} can be applied.

The following theorem addresses this partially by assuming only $b \le \widetilde{\kappa}_1$ (ensuring weak
optimality of {\sc post-2} mechanism due to Theorem \ref{theo:post2}). But for different ranges of $b$, it provides
sufficient conditions on distributions under which a uniform {\sc post-2}
mechanism is either strictly optimal or is equivalent to a {\sc post-1} mechanism.

\begin{theorem}
\label{theo:uniform}
Suppose $F$ satisfies the MLRP, $x(1-F_1(x))$ is strictly concave, $F_2$ satisfies SC, and $b \le \widetilde{\kappa}_1$.
Then, the following statements hold.

\begin{enumerate}

\item If $b < \widetilde{\kappa}_2$ and $\frac{f_1(\widetilde{\kappa}_2)}{f_2(\widetilde{\kappa}_2)} \le 1$,
a uniform {\sc post-2} mechanism with price $\kappa^* > b$ that solves the following equation is optimal:
\begin{align}\label{eq:upost2}
xf_2(x) - (1-F_2(x)) &= b \Big( f_2(x) - f_1(x)\Big)
\end{align}
\item If $\widetilde{\kappa}_2 \le b$ and $\frac{f_1(\widetilde{\kappa}_2)}{f_2(\widetilde{\kappa}_2)} > 1$,
a uniform {\sc post-2} mechanism with price $\kappa^*=b$ is optimal.

\item If $\widetilde{\kappa}_2 \le b$ and $\frac{f_1(\widetilde{\kappa}_2)}{f_2(\widetilde{\kappa}_2)} \le 1
< \frac{f_1(b)}{f_2(b)}$, a uniform {\sc post-2} mechanism with price $\kappa^*=b$ is optimal.
\end{enumerate}
\end{theorem}

The proof of Theorem \ref{theo:uniform} leverages the result in Theorem \ref{theo:th4} to derive condition under which a uniform {\sc post-2} mechanism is optimal and how the corresponding posted price can be computed. In part (1) of Theorem \ref{theo:uniform} we apply the first part of Theorem \ref{theo:th4} by additionally requiring that $b < \widetilde{\kappa}_2$.
This ensures that optimal {\sc post-2} mechanism generates strictly more revenue than optimal {\sc post-1} mechanism.
Under conditions (2) and (3) of Theorem \ref{theo:uniform} when $b \in [\widetilde{\kappa}_2,\widetilde{\kappa}_1]$, optimal {\sc post-1} and {\sc post-2} mechanisms are identical. Finally, when $b > \widetilde{\kappa}_1$, we have shown in Theorem \ref{theo:post2} that {\sc post-1} mechanism is strictly optimal.\\

\noindent {\sc Remark.} The sufficient conditions in Theorems \ref{theo:th4} and \ref{theo:uniform} can be satisfied for a variety of distributions. As an illustration,
consider a cdf $G$ with density $g$ on $[0,\beta]$. Let $x_1$ and $x_2$ be two draws from $[0,\beta]$
using $G$. We assign $v_1 := \max(x_1,x_2)$ and $v_2:=\min(x_1,x_2)$.
In this case $F_1(x)=[G(x)]^2$ and $F_2(x)=1-[1-G(x)]^2$.
We claim that {\sl if $xg(x)$ is increasing} then all the conditions imposed in
Theorem \ref{theo:uniform} (and all other theorems) on distributions are satisfied.
To see this, note that $f_1(x)=2g(x)G(x)$ and $f_2(x)=2g(x)(1-G(x))$.
Hence, $\frac{f_1(x)}{f_2(x)}=\frac{G(x)}{1-G(x)}$, which is strictly increasing.
Hence, MLRP holds. Next,
\begin{align*}
x-\frac{1-F_2(x)}{f_2(x)} &= x - \frac{[1-G(x)]^2}{2g(x)(1-G(x))} = x - \frac{1-G(x)}{2g(x)}
\end{align*}
Since $xg(x)$ is increasing, the above expression crosses zero exactly once. Finally, for strict concavity
of $x(1-F_1(x))$, we calculate its derivative: $-xf_1(x) + 1 - F_1(x) = -2xg(x)G(x) + 1- [G(x)]^2$. Since
$xg(x)$ is increasing, the derivative is strictly decreasing, establishing strict concavity of $x(1-G(x))$.
This shows that all the conditions of our theorems hold in this class of distributions (this includes the uniform distribution and some family of Beta distributions). Depending on
the exact nature of distribution (which determines the value of $\widetilde{\kappa}_i$ for each $i$),
and the budget,
we can be more precise about the optimal mechanism.

We expand this example to give implications of our theorems. Let $\beta=1$ and suppose $g$ is the uniform density: $g(x)=1$ for all $x \in [0,1]$. So, $G(x)=x$ and $F_1(x)=x^2, F_2(x)=1-(1-x)^2$. Note $xg(x)=x$ is strictly increasing, and hence, all conditions of our theorems hold. Now, we can compute $\widetilde{\kappa}_1$ and $\widetilde{\kappa}_2$ values as 
\begin{align*}
\widetilde{\kappa}_1 = \frac{1}{\sqrt{3}},~\widetilde{\kappa}_2=\frac{1}{3}
\end{align*}
Hence, by Theorem \ref{theo:post2}, if $b \le \frac{1}{\sqrt{3}}$ a {\sc post-2} mechanism is optimal and if $b > \frac{1}{\sqrt{3}}$ a {\sc post-1} mechanism with price $\frac{1}{\sqrt{3}}$ is strictly optimal.
Further, $f_1(x)=f_2(x)$ when $x=\frac{1}{2}$, and MLRP and the fact that $\widetilde{\kappa}_2=\frac{1}{3} < \frac{1}{2}$ implies $\frac{f_1(\widetilde{\kappa}_2)}{f_2(\widetilde{\kappa}_2)} < 1$. Hence, Theorem \ref{theo:uniform} implies: if $b \le \frac{1}{3}$, then a uniform {\sc post-2} mechanism with price greater than the budget (which solves equation (\ref{eq:upost2})) is optimal. If $b \in (\frac{1}{2},\frac{1}{\sqrt{3}}]$, then it is a uniform {\sc post-2} mechanism with price equal to the budget (Theorem \ref{theo:uniform} part (3) applies since $\frac{f_1(b)}{f_2(b)} > 1$). Theorem \ref{theo:uniform} is silent when $b \in (\frac{1}{3},\frac{1}{2}]$: it will be a {\sc post-2} mechanism (by Theorem \ref{theo:post2}), but can be an interior {\sc post-2} mechanism or a uniform {\sc post-2} mechanism.

\section{Relation to the literature}
\label{sec:lit}

Our paper is related to a couple of strands of literature in mechanism design. Before describing them,
we relate our work to two papers that seem most related to our work. The first is the work of \citet{Burkett16},
who studies a manager-agent model where the agent is participating in an auction mechanism with a third party.
In their model, the third party has proposed a mechanism for selling a single good. Given this third-party mechanism, the manager announces another mechanism (termed as {\em contract}) to the agent.
The sole purpose of the contract is to determine the amount the agent will bid in the third-party mechanism.
In their model, the agent's value of the good is the {\em only} private information - the manager's value can be determined from that of the agent's. The main result in this paper is that the optimal contract
for the manager is a ``budget-constraint'' contract. This optimal contract specifies a cap on the report of each type of the agent to
the third-party mechanism and involves no side-payments between the manager and the agent.~\footnote{In a related paper, \citet{Burkett15}
considers first-price and second-price auctions and compares their revenue and efficiency properties when a seller faces such manager-agent pairs.}

In our model, the values of the manager and the agent can be completely different (at a technical
level, \citet{Burkett16} has a one-dimensional mechanism design problem, whereas ours is a two-dimensional mechanism design problem).
Further, we do not model decision-making by our manager-agent pair via a contract. In other words, the sequential decision-making
in our model makes it quite different from \cite{Burkett15,Burkett16}.

A unique feature of our model is that the agent and the manager have a different value of the object to the firm.
The agent (who is better informed) does not persuade the manager to change his value. This is different from
\citet{MT19}, who study a model where a single good is sold to
a set of buyers, and each buyer is advised by a unique advisor with a bias. Before the start of the auction, there is a communication from
the advisor to the buyer, which influences how much the buyer bids in the auction.

\noindent {\sc Multidimensional mechanism design.} The type space of our agent is two-dimensional.
It is well known that the problem of finding an optimal mechanism for selling multiple goods
(even to a single buyer) is difficult. A long list of papers have shown the difficulties involved in extending the one-dimensional
results in \citet{Mussa78,Myerson81,Riley83} to multidimensional framework - see \citet{Armstrong00,Manelli07} as examples.
Even when the seller has just {\em two} objects, and there is just one buyer with additive valuations (i.e., value for both the objects
is the sum of values of both the objects), the optimal mechanism is difficult to describe~\citep{Manelli07,Dask17,Hart17}.
Indeed, strong conditions on priors are required to ensure that the optimal mechanism is deterministic~\citep{Pa11,BM22}.
This has inspired researchers to consider {\em approximately} optimal mechanisms~\citep{Chawla07,Chawla10,Hart17} or
additional robustness criteria for design~\citep{Carroll17}.
Compared to these problems, our two-dimensional mechanism design problem becomes tractable because of the nature
of incentive constraints, which in turn is a consequence of the preference of the agent.
\citet{GIYB21} study a model of single object
sale where the value of the object and an outside option are private information of the agent. They give
sufficient conditions under which a posted price mechanism is optimal. In our model, the manager's decision leads to
another option for the agent. However, this option is dependent on the mechanism of the seller. Further, the usual outside
option $(0,0)$ exists in our model.\\

\noindent {\sc Mechanism design with budget constraints.} In our model, the agent is budget-constrained, but the manager is not.
In the standard model, when there is a single object and the buyer(s) is budget constrained,
the space of mechanisms
is restricted to be such that payment is no more than the budget. This feasibility requirement on the mechanisms
essentially translates to violating the quasilinearity assumption of the buyer's preference for prices above the budget.
This introduces additional complications
for finding the optimal mechanism \citep{Laffont96,Che00,Pai14}. When the budget is private information,
the problem becomes even more complicated - see \citet{Che00} for a description of the optimal mechanism for the
single buyer case and \citet{Pai14} for a description of the optimal mechanism for the multiple buyers case.
All these mechanisms involve randomization, but the nature of randomization is quite different from ours.
This is because the source of randomization in all these papers is either due to budget being private information (hence, part of the type,
as in \citet{Che00,Pai14}) or because of multiple agents with budget being common knowledge (as in \citet{Laffont96,Pai14}).
Indeed, with a single agent and public budget, the optimal mechanism in a standard single object allocation model is a posted price mechanism.
This is in contrast with our result where we get a randomized optimal mechanism
even with the budget being common knowledge.
Also, the set of menus in the optimal mechanism in the standard single object auction with budget constraint may have more than three outcomes.
Further, the outcomes in the menu of these optimal mechanisms are not as simple as our {\sc post-2} mechanism.
Finally, like us, these papers assume that the budget is exogenously determined by the agent. If the buyer can choose his budget
constraint, then \citet{Baisa16} shows that the optimal mechanism in a multiple buyers setting allocates the object
efficiently whenever it is allocated - this is in contrast to the exogenous budget case \citep{Laffont96,Pai14}.
\citet{Li21} studies a model of financially constrained agents buying a single object when the mechanism designer
can inspect the budget at a cost (both value and budget are private information). The optimal mechanism is significantly
complicated and involves inspection.

\section{Discussions}
\label{sec:disc}

We have considered a model where a manager and an agent acquire an object
for the firm jointly. However, the manager has delegated the participation
in the mechanism to the agent with a budget constraint. The manager does not
question the agent as long as the agent does not violate the budget constraint.
However, the agent can come back to the manager if the payment exceeds the budget, in which
case the manager takes decisions. Besides the agent and the board of directors example
highlighted in the paper, there are other settings where such decision-making may be
plausible. Our results highlight the nuanced, but still simple, nature of the optimal mechanism
in such settings.

As we observed in the example in Section \ref{sec:illust}, the ex-post payoff of the manager in the optimal 
mechanism can be negative. 
However, the ex-ante payoff the manager in the optimal mechanism may be positive as the example below illustrates. We consider an example with $\beta = 1$ and uniform distribution with budget constraint $b = 0.25$. 
In this case, the optimal mechanism is a {\sc post-2} mechanism with $\kappa_1 = \kappa_2 = 0.39$.
The manager's payoff when agent chooses $(\frac{0.25}{0.39},0.25)$ is given by
\begin{align*}
\int_{0}^{0.39} 2 \big(y \frac{0.25}{0.39} - 0.25 \big) (1-0.39) dy = -0.059
\end{align*}
The manager's payoff when he chooses $(1,0.39)$ is given by
\begin{align*}
\int_{0.39}^{1} \int_{y}^{1} 2 \big(y - 0.39\big)dx dy = 0.075
\end{align*}
Therefore, the manager's ex-ante payoff from the mechanism is positive. Although our main interpretation is that the manager delegates to the agent due to decision-making costs, the example shows that the ex-ante payoff remains positive even without these costs.\footnote{For certain distributions, the ex-ante payoff of the manager may be negative too. In those cases, it is useful to think that the manager has a high decision-making cost.} The specifics of this delegation are beyond the scope of this paper (see \cite{Burkett16} for a delegation model with a budget constraint).

We make a few comments about possible extensions.
First, we have assumed that the value of the agent is more than that of the manager.
Under stronger conditions on distributions, we believe that our main results can be
extended when we relax this assumption. Second, we assume that the budget is
observed by the seller. Relaxing this assumption significantly complicates the
model. Some partial characterizations of optimal mechanisms seem possible
when the budget is also a private information.

\appendix

\section{Proof of Lemma \ref{lem:post2}}
\label{sec:ppost2}

\begin{proof}
Let $(q,p)$ be a {\sc post-2} mechanism defined by $\kappa_1,\kappa_2 \in [0,\beta]$ such that
$\kappa_1 \le \kappa_2$. If $\kappa_1=b$, the {\sc post-2} mechanism collapses to a {\sc post-1} mechanism, which
is IC and IR. Similarly, if $\kappa_2=\beta$, we have a modified {\sc post-1} mechanism where all types with $v_1 \ge \kappa_1$
are given the object with probability $\frac{b}{\kappa_1}$ at price $b$ and others do not get the object and pay zero.
Again, this is an IC and IR mechanism.

So, we consider the case $b < \kappa_1 \le \kappa_2 < \beta$.
In this case, the range of the {\sc post-2} mechanism contains
three outcomes: $(0,0), (1,b+ \kappa_2 (1-\frac{b}{\kappa_1})),$ and
$(\frac{b}{\kappa_1},b)$. Two of these outcomes are within the budget. Take a type $v \equiv (v_1,v_2)$. We consider three cases.
Denote $X:=R(q,p)$. \\

\noindent {\sc Case 1.} $v_2 < \kappa_2$. Then
 \begin{align}\label{eq:p1}
v_2 - b - \kappa_2 + b \frac{\kappa_2}{\kappa_1} &= (v_2-\kappa_2) \big(1-\frac{b}{\kappa_1}\big) + v_2 \frac{b}{\kappa_1} - b < v_2 \frac{b}{\kappa_1} - b
\end{align}
where the inequality follows from $b < \kappa_1$ and $v_2 < \kappa_2$. This implies $Ch(X;v_2) \subseteq X_b$ since $(1,b+ \kappa_2 (1-\frac{b}{\kappa_1})) \not\in Ch(X;v_2)$. Therefore, $Ch(X;v_2) \rhd_{v_1} Ch(X_b;v_1)$ does not hold. Hence, $Ch(X;v) = Ch(X_b;v_1)$  whenever $v_2 < \kappa_2$.

If $v_1 < \kappa_1$, then $Ch(X_b;v_1) = \{(0,0)\}$ since $v_1 \frac{b}{\kappa_1} - b < 0$. Hence, $(q(v),p(v)) = (0,0) \in Ch(X_b;v_1) = Ch(X;v)$.

If $v_1 \geq \kappa_1$, then $(\frac{b}{\kappa_1},b) \in Ch(X_b;v_1)$ since $v_1 \frac{b}{\kappa_1} - b \geq 0$. Hence, $(q(v),p(v)) = (\frac{b}{\kappa_1},b) \in Ch(X_b;v_1) = Ch(X;v)$.

\noindent {\sc Case 2.} $v_1 \ge v_2 \geq \kappa_2 \ge \kappa_1$ and $v_1 > \kappa_2$.
Since $v_i \geq \kappa_2$
\begin{align}
  \label{ineq:i1}
v_i - b - \kappa_2 + b \frac{\kappa_2}{\kappa_1} = (v_i-\kappa_2)(1-\frac{b}{\kappa_1}) + v_i \frac{b}{\kappa_1} - b \ge v_i \frac{b}{\kappa_1} - b \ge 0
\end{align}
where the first inequality follows from $v_i \ge \kappa_2$ and $\kappa_1 > b$, the second inequality follows from $v_i \ge \kappa_1$.

Hence, we get $(1,b + \kappa_2 (1-\frac{b}{\kappa_1})) \in Ch(X; v_i)$ in this case.
Since $v_1 > \kappa_2 \ge \kappa_1$ and $\kappa_1 > b$, both inequalities in (\ref{ineq:i1}) are strict for $v_i=v_1$. Hence, $Ch(X_b;v_1) = \{(\frac{b}{\kappa_1},b)\}$.
Therefore, $Ch(X;v_2) \rhd_{v_1} Ch(X_b;v_1)$ holds and $Ch(X;v) = Ch(X;v_2)$ in this case.

Now, when $v_2 > \kappa_2$, the first inequality is strict in (\ref{ineq:i1}) for $v_i=v_2$.
As a result, $Ch(X;v_2)=\{(1,b+\kappa_2 (1-\frac{b}{\kappa_1}))\}$.
This implies that $(q(v),p(v)) = (1,b + \kappa_2 (1-\frac{b}{\kappa_1})) \in Ch(X;v_2) = Ch(X; v)$.
When $v_2=\kappa_2$, the first inequality in (\ref{ineq:i1}) is an equality. Hence, $(q(v),p(v)) = (\frac{b}{\kappa_1},b) \in Ch(X;v_2)=Ch(X;v)$. \\

\noindent {\sc Case 3.} $v_1 = v_2 = \kappa_2$.
Then $Ch(X;v_2) \sim_{v_1} Ch(X_b;v_1)$ holds. Hence, $(q(v),p(v)) = (\frac{b}{\kappa_1},b) \in Ch(X_b;v_1) = Ch(X;v)$.
\end{proof}

\section{Proof of Theorem \ref{theo:main}}
\label{sec:pmain}

We begin the proof of this theorem with some preliminary lemmas.

\begin{lemma}
\label{lem:p2}
Suppose $v_i,v_i' \in [0,\beta]$ with $v_i' > v_i$. For any $X \subseteq Z$, if $(a',t') \in Ch(X;v_i')$ and
$(a,t) \in Ch(X;v_i)$, then $a \le a'$ and $t \le t'$.
\end{lemma}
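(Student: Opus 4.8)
The plan is to invoke the standard ``add the two incentive inequalities'' argument that underlies monotonicity of quasilinear choice, applied to an arbitrary menu $X \subseteq Z$. First I would record the two optimality conditions implied by the hypotheses: since $(a',t') \in Ch(X;v_i')$ and $(a,t) \in X$, we have $v_i' a' - t' \ge v_i' a - t$; and since $(a,t) \in Ch(X;v_i)$ and $(a',t') \in X$, we have $v_i a - t \ge v_i a' - t'$.

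Adding these two inequalities, the payment terms $t$ and $t'$ cancel, leaving $v_i' a' + v_i a \ge v_i' a + v_i a'$, i.e. $(v_i' - v_i)(a' - a) \ge 0$. Since $v_i' > v_i$, this forces $a' \ge a$, which is the allocation-monotonicity half of the claim. For the payment half, I would go back to the inequality $v_i a - t \ge v_i a' - t'$ and rearrange it as $t' - t \ge v_i(a' - a)$; having just established $a' \ge a$, and using $v_i \ge 0$ (which holds since $v_i \in [0,\beta]$), the right-hand side is nonnegative, hence $t' \ge t$.

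I do not expect a genuine obstacle here: the argument is the one-dimensional Myersonian monotonicity fact, and it uses nothing about $X$ beyond $X \subseteq Z$, so it applies verbatim whether the menu is finite or infinite. The only points needing a line of care are orienting the two incentive inequalities correctly (each type weakly prefers its own selected outcome to the other type's selected outcome) and noting that payment monotonicity requires $v_i \ge 0$ in addition to $v_i' > v_i$.
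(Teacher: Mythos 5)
Your proof is correct and is essentially identical to the paper's: both add the two revealed-preference inequalities to obtain $(v_i'-v_i)(a'-a)\ge 0$, deduce $a'\ge a$, and then use the lower type's inequality together with $v_i\ge 0$ to get $t'\ge t$. Your explicit remark that payment monotonicity needs $v_i\ge 0$ is a point the paper uses implicitly.
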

\begin{proof}
Since $(a',t') \in Ch(X;v_i')$ and $(a,t) \in Ch(X;v_i)$, we have
\begin{align*}
a' v_i' - t' &\ge a v_i' - t \\
a v_i - t &\ge a' v_i - t'
\end{align*}
Adding gives $(a'-a)(v_i'-v_i) \ge 0$. Since $v'_i > v_i$, we have $a' \ge a$.
Then, the second inequality gives $(a'-a)v_i \le t'-t$. Since $a' \ge a$,
we get $t' \ge t$.
\end{proof}

\begin{lemma}
\label{lem:p3}
If $(q,p)$ is IC and IR, then there exists $\kappa_{(q,p)} \in (0,\beta]$ such that
for each $v_i \in [0,\beta]$ and each $(a,t) \in Ch(R(q,p);v_i)$, we have
\begin{align*}
v_i < \kappa_{(q,p)} \Longrightarrow t \le b \\
v_i > \kappa_{(q,p)} \Longrightarrow t > b
\end{align*}
\end{lemma}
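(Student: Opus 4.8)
The plan is to define $\kappa_{(q,p)}$ as an infimum/supremum cutoff and show it separates the ``budget-respecting'' types from the ``budget-violating'' types. First I would observe that by the definition of $Ch(X;v_i)$ as a choice correspondence for a single linear utility, Lemma \ref{lem:p2} gives monotonicity: if $v_i' > v_i$ and $(a',t') \in Ch(R(q,p);v_i')$, $(a,t) \in Ch(R(q,p);v_i)$, then $t' \ge t$. So the ``chosen payment at $v_i$'' is (weakly) monotone in $v_i$, even though the correspondence may be multivalued at a measure-zero set of $v_i$. This immediately suggests setting
\[
\kappa_{(q,p)} := \inf\{ v_i \in [0,\beta] : \exists (a,t)\in Ch(R(q,p);v_i) \text{ with } t > b \},
\]
with the convention that $\kappa_{(q,p)} = \beta$ (or any value making the statement vacuous on the high side) if the set is empty. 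By monotonicity of chosen payments, every $v_i$ strictly above this infimum must have all chosen payments exceeding $b$, and every $v_i$ strictly below it must have all chosen payments at most $b$ — this is exactly the two displayed implications.

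The key steps, in order: (i) invoke Lemma \ref{lem:p2} to get that chosen payments are monotone across types (any selection from $Ch(R(q,p);v_i')$ dominates any selection from $Ch(R(q,p);v_i)$ when $v_i' > v_i$); (ii) define $\kappa_{(q,p)}$ as the infimum above; (iii) for $v_i > \kappa_{(q,p)}$, pick any $v_i'$ with $\kappa_{(q,p)} \le v_i' < v_i$ admitting a chosen payment $t' > b$, and use monotonicity to conclude every chosen payment at $v_i$ is $\ge t' > b$; (iv) for $v_i < \kappa_{(q,p)}$, suppose some chosen payment at $v_i$ exceeds $b$ — then $v_i$ belongs to the defining set, contradicting that $\kappa_{(q,p)}$ is a lower bound for it; (v) finally, argue $\kappa_{(q,p)} > 0$: by IR, $(0,0) \in R(q,p)$, and at $v_i = 0$ the choice correspondence must select an outcome with $-t \ge 0$, i.e. $t \le 0 \le b$, so $v_i = 0$ is never in the defining set, and by the monotonicity/infimum structure one can push the cutoff strictly above $0$ (more carefully: since $t \le 0$ at $v_i=0$ and payments are monotone, there is a neighborhood of $0$ on which chosen payments stay $\le b$ — here I would use that the range $R(q,p)$ is a fixed set and the outcome chosen at small $v_i$ cannot jump to a payment above $b$ without the indifference condition at $0$ failing; alternatively just take $\kappa_{(q,p)}$ to be the infimum and note it cannot equal $0$ because $0$ itself is not in the set).

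The main obstacle I anticipate is handling the boundary case $v_i = \kappa_{(q,p)}$ itself, where the correspondence could legitimately be multivalued with some selections paying $\le b$ and others $> b$ — but the statement is carefully phrased with strict inequalities ($v_i < \kappa_{(q,p)}$ and $v_i > \kappa_{(q,p)}$), so the cutoff type is simply excluded and no delicacy is needed there. The only genuinely nontrivial point is establishing $\kappa_{(q,p)} > 0$ rather than $\ge 0$; this is where I would lean on IR to say that at $v_i = 0$ the chosen payment is nonpositive, hence $0$ is not in the defining set, and then argue (again via Lemma \ref{lem:p2}'s monotonicity, or a simple continuity-of-the-finite-or-compact-range argument) that the infimum is not attained in the limit at $0$, i.e. it is strictly positive. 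If one is worried about an infimum that is $0$ but not attained, one can sidestep this by noting that since chosen payments are monotone and bounded, and equal something $\le 0$ at $v_i = 0$, there is a genuine positive threshold below which they never exceed $b$.
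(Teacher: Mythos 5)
Your proposal is correct and follows essentially the same route as the paper, which defines $\kappa_{(q,p)}$ as the supremum of the set of types whose chosen payments all lie at or below $b$ and invokes Lemma \ref{lem:p2} for monotonicity of chosen payments; your infimum of the complementary set is the same cutoff. On the one point you flag as delicate, strict positivity of the cutoff, the clean argument (which your ``monotone and bounded'' fallback does not quite deliver on its own) is to use IR at \emph{every} type rather than only at zero: since $(0,0)\in R(q,p)$, any $(a,t)\in Ch(R(q,p);v_i)$ satisfies $v_i a - t \ge 0$, hence $t \le v_i a \le v_i$, so every type $v_i \le b$ has all chosen payments at most $b$ and the cutoff is at least $b>0$.
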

\begin{proof}
By IR, if $(a,t) \in Ch(R(q,p);0)$ then $t \le 0$. Let $V^-:=\{v_i \in [0,\beta]:
t \le b~\forall~(a,t) \in Ch(R(q,p);v_i)\}$. Then, $0 \in V^-$.
Let $\kappa_{(q,p)} = \sup_{v_i \in V^-} v_i$. Note here that $\kappa_{(q,p)}$ does not depend on $i$. By Lemma \ref{lem:p2}, the result then follows.
\end{proof}

For any mechanism $(q,p)$, define the following set of types.
\begin{align*}
%V^-(q,p) &= \{v: p(v) \le b\}
V^+(q,p) &= \{v: p(v) > b\}
\end{align*}
The proof of Theorem \ref{theo:main} consists of considering two classes of mechanisms, one where
$V^+(q,p)$ has zero Lebesgue measure and the other where it has non-zero Lebesgue measure. In particular,
consider the following partitioning of mechanisms.
\begin{align*}
M^- &= \{(q,p)~\textrm{is IC and IR}: V^+(q,p)~\textrm{has zero Lebesgue measure}\} \\
M^+ &= \{(q,p)~\textrm{is IC and IR}: (q,p) \notin M^-\}
\end{align*}

\subsection{\sc{post-2} is optimal in $M^+$}

\begin{lemma}
\label{lem:p7}
If $(q,p) \in M^+$ then for all $v$,
\begin{align*}
p(v) &\leq b \qquad \text{if}~v_2 < \kappa_{(q,p)}\\
p(v) &> b \qquad \text{if}~v_2 > \kappa_{(q,p)}
\end{align*}
\end{lemma}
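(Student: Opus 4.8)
The plan is to exploit the structure of a mechanism in $M^+$: since $V^+(q,p)$ has positive Lebesgue measure, there is a set of types (of positive measure) that pay strictly more than $b$, and for such a type the agent cannot afford her assigned outcome in her budget set, so she must be approaching the principal. I would first translate the definition of IC together with the choice correspondence $Ch(X;v)$ into the statement that whenever $p(v) > b$, we have $(q(v),p(v)) \in Ch(X;v_2)$, i.e. the outcome at $v$ is $v_2$-optimal in the whole range $X = R(q,p)$; and whenever $p(v) \le b$, the outcome is $v_1$-optimal in the budget set $X_b$ (or the two are indifferent for $v_1$, in which case the budget outcome is also picked). So the key move is: the sign of $p(v)$ relative to $b$ is governed, on the region where $p(v) > b$, entirely by $v_2$.

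Next I would invoke Lemma \ref{lem:p3} applied to the one-dimensional choice correspondence $Ch(X; v_2)$, which gives a single cutoff $\kappa_{(q,p)}$ such that every outcome in $Ch(X;v_2)$ has payment $\le b$ when $v_2 < \kappa_{(q,p)}$ and payment $> b$ when $v_2 > \kappa_{(q,p)}$. The remaining task is to connect this cutoff on $v_2$ to the actual payment $p(v)$. If $v_2 > \kappa_{(q,p)}$: I want to show $p(v) > b$. Consider the outcome $(q(v),p(v))$. If it were that $p(v) \le b$, then $(q(v),p(v)) \in X_b$ and it is $v_1$-optimal in $X_b$; but I would compare against what the principal would choose, $Ch(X;v_2)$, all of whose outcomes have payment $>b$ hence lie outside $X_b$. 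Using Lemma \ref{lem:p2} (monotonicity of choice in the one-dimensional parameter) and $v_1 \ge v_2 > \kappa_{(q,p)}$, I would argue the $v_1$-optimal outcome over all of $X$ must also have payment $> b$, and moreover that $Ch(X;v_2) \rhd_{v_1} Ch(X_b;v_1)$ — precisely because a higher-payment, higher-allocation outcome that is good for $v_2$ is strictly better for the larger value $v_1$ — contradicting that the agent stays in her budget set. Hence $p(v) > b$. Symmetrically, if $v_2 < \kappa_{(q,p)}$: every outcome in $Ch(X;v_2)$ has payment $\le b$, so $Ch(X;v_2) \subseteq X_b$; then $Ch(X;v_2) \rhd_{v_1} Ch(X_b;v_1)$ cannot hold (an element of $X_b$ cannot strictly beat the best element of $X_b$), so $Ch(X;v) = Ch(X_b;v_1)$, and every such outcome has payment $\le b$, giving $p(v) \le b$.

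I expect the main obstacle to be the $v_2 > \kappa_{(q,p)}$ direction: showing that the agent's $v_1$-optimal choice \emph{within the budget set} cannot be what is implemented, i.e. that the implemented outcome is genuinely the principal's choice. The subtlety is handling the boundary/indifference clause in the definition of $Ch(X;v)$ (where $Ch(X;v_2)$ and $Ch(X_b;v_1)$ are $v_1$-indifferent, or a budget outcome strictly beats the principal's choice for $v_1$); I would need to rule this out by a strict-inequality argument using $v_1 \ge v_2$, the fact that the principal's preferred outcome has strictly larger allocation and payment than any budget-feasible one (so its $v_1$-surplus strictly dominates once $v_1$ is at least as large as the $v_2$ that made it optimal), leaning on Lemma \ref{lem:p2}. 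A second, more bookkeeping-type obstacle is verifying that the cutoff $\kappa_{(q,p)}$ extracted from Lemma \ref{lem:p3} — which was stated for $Ch(R(q,p);v_i)$ for a scalar $v_i$ — is the right object to feed into the two-dimensional statement; but since that lemma's cutoff is explicitly independent of the index $i$, I would just apply it with the role of $v_i$ played by $v_2$.
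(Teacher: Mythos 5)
Your proposal is correct and follows essentially the same route as the paper: apply Lemma \ref{lem:p3} to $v_2$ to place $Ch(X;v_2)$ entirely inside $X_b$ (when $v_2<\kappa_{(q,p)}$, so the principal's choice cannot strictly dominate and the agent's budget-constrained choice is implemented) or entirely outside $X_b$ (when $v_2>\kappa_{(q,p)}$, where the strictly higher payment forces a strictly higher allocation and hence, via $v_1\ge v_2$, strict $v_1$-dominance of $Ch(X;v_2)$ over $Ch(X_b;v_1)$). The strict-inequality concern you flag is resolved exactly as you suggest, so no gap remains.
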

\begin{proof}
Denote $X \equiv R(q,p)$. Lemma \ref{lem:p3} implies that for all $v$ with $v_2 < \kappa_{(q,p)}$ we have $Ch(X;v_2) \subseteq X_b$. Therefore, $Ch(X;v) = Ch(X_b;v_1)$. Incentive compatibility then implies that $p(v) \leq b$. Lemma \ref{lem:p3} also implies that for all $v$ with $v_2 > \kappa_{(q,p)}$ we have $Ch(X;v_2) \subseteq X\setminus X_b$. Pick any $(a,t) \in Ch(X;v_2)$ and $(a',t') \in X_b$ and note that $v_2 a - t > v_2 a' - t'$. Using $v_1 \geq v_2$ and $t > t'$ we derive
$v_1 a - t > v_1 a' - t'$.
Since this is true for all $(a,t) \in Ch(X;v_2)$ and $(a',t') \in X_b$, we conclude $Ch(X;v_2) \rhd_{v_1} Ch(X_b;v_1)$. Therefore, $Ch(X;v) = Ch(X;v_2) \subseteq X \setminus X_b$. Incentive compatibility then implies that $p(v) > b$.
\end{proof}

We now fix $(q,p)$ and show that there exists a {\sc post-2} mechanism that generates more
expected revenue than $(q,p)$. For simplicity, we will drop $(q,p)$ subscript from $\kappa_{(q,p)}$.
We will partition the type space $V = \{v \in [0,\beta]^2: v_1 \ge v_2\}$ into three parts
as (see Figure \ref{fig:fp1}):
\begin{align*}
V_{\ell} &= \{v \in V: v_1 \le \kappa\} \\
V_{h} &= \{v \in V: v_2 \ge \kappa\} \\
V_{m} &= V \setminus (V_1 \cup V_2)
\end{align*}
Note that $V_h \cap V_{\ell} = \{(\kappa,\kappa)\}$. Hence, $V_{\ell},V_h,V_m$ do not represent
an exact partitioning of $V$, but this will not influence our analysis.

\begin{figure}
\centering
\includegraphics[width=3in]{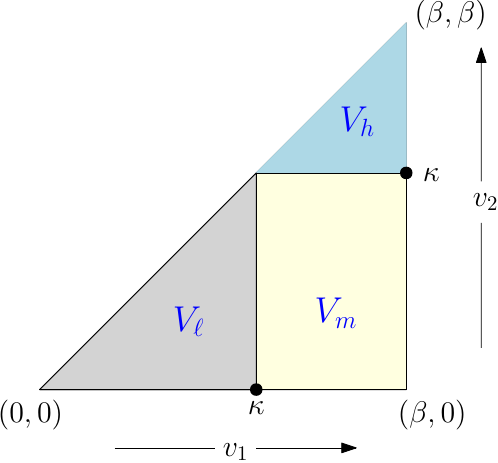}
\caption{Partitioning of type space $V$}
\label{fig:fp1}
\end{figure}

Now, consider the following type: $(\kappa,0)$ and define $q^\dag$ as the unique solution to
\begin{align}
\label{eq:qdag}
\kappa q^\dag - b &= \kappa q(\kappa,0) - p(\kappa,0)
\end{align}
Since $p(\kappa,0) \le b$, we get
\begin{align}
  \label{eq:qdag}
q^{\dag} &= q(\kappa,0) + \frac{b-p(\kappa,0)}{\kappa} \ge q(\kappa,0)
\end{align}
Now, consider the type $\kappa^{\epsilon} \equiv (\kappa+\epsilon,\kappa+\epsilon)$, where $\epsilon > 0$ but sufficiently small.
By Lemma \ref{lem:p7}, $p(\kappa^\epsilon) > b$.
Incentive compatibility of $(q,p)$ implies $(q(\kappa^\epsilon),p(\kappa^\epsilon)) \in Ch(R(q,p);\kappa+\epsilon)$. Hence,
\begin{align*}
(\kappa+\epsilon) q(\kappa^\epsilon) - p(\kappa^\epsilon) &\ge (\kappa+\epsilon) q(\kappa,0) - p(\kappa,0) = \epsilon q(\kappa,0) + \kappa q^\dag - b \\
\Longleftrightarrow q^\dag &\le q(\kappa^\epsilon) - \frac{p(\kappa^\epsilon) - b}{\kappa} + \frac{\epsilon}{\kappa}(q(\kappa^\epsilon) - q(\kappa,0)) \\
&\le q(\kappa^\epsilon) + \frac{\epsilon}{\kappa}(q(\kappa^\epsilon) - q(\kappa,0))
\end{align*}
where the last inequality uses $p(\kappa^\epsilon) > b$. As $\epsilon \rightarrow 0$, we get (using (\ref{eq:qdag}))
\begin{align}
\label{eq:qdag2}
q(\kappa,0) \le q^\dag &\le \lim \limits_{\epsilon \rightarrow 0} q(\kappa^\epsilon)
\end{align}
This shows that $q^\dag \in [0,1]$ is a valid allocation probability.
We also observe the following about utilities of types. Take any $v \equiv (v_1,v_2)$ with $v_2 > \kappa$.
By IC and the fact that $p(v) > b$, we get
\begin{align*}
u(v) &:= v_2 q(v) - p(v) \ge v_2 q(\kappa,0) - p(\kappa,0) \ge \kappa q(\kappa,0) - p(\kappa,0) = \kappa q^\dag - b
\end{align*}
where the last equality follows from the definition of $q^\dag$. Hence, for all $v$ with $v_2 > \kappa$, we have
\begin{align}
  \label{eq:udag}
u(v) &\ge \kappa q^\dag - b
\end{align}

Now, $q^\dag$ will be used to determine an upper bound on expected revenue of $(q,p)$. We will
show that this upper bound can be achieved by a {\sc post-2} mechanism.
We derive this upper bound by deriving upper bounds for expected revenue from $V_{\ell}, V_h,$ and $V_m$
separately.

\subsection{Upper bounding $V_{\ell}$}
\label{sec:vell}

For any $v \in V_{\ell}$ with $v_1 < \kappa$,  by Lemma \ref{lem:p3} we have
\begin{equation}\label{eq:vlIC}
(q(v),p(v)) \in Ch(R(q,p);v_1).
\end{equation}
Also, $(q(\kappa,0),p(\kappa,0)) \in Ch(R(q,p);\kappa)$. Hence, for any $v$ with $v_1 < \kappa$, IC constraints imply
\begin{align*}
v_1 q(v) - p(v) &\ge v_1 q(\kappa,0) - p(\kappa,0) \\
\kappa q(\kappa,0) - p(\kappa,0) &\ge \kappa q(v) - p(v)
\end{align*}
The first inequality together with equation (\ref{eq:qdag}) and the fact that $p(\kappa,0) \leq b$ implies for all $v$ with $v_1 < \kappa$,
\begin{align}\label{eq:qdag3}
v_1 q(v) - p(v) &\ge v_1 q^\dag - b
\end{align}
while the second inequality together with equation (\ref{eq:qdag}) implies for all $v$ with $v_1 < \kappa$,
\begin{align}\label{eq:qdag4}
\kappa q^\dag - b &\ge \kappa q(v) - p(v)
\end{align}

Restricting to $V_{\ell}$, we define a mechanism $(\tilde{q},\tilde{p}) : V_{\ell} \to [0,1] \times \mathbb{R}$ by setting $(\tilde{q}(v),\tilde{p}(v)) = (q^\dag,b)$ when $v_1 = \kappa$ and $(\tilde{q}(v),\tilde{p}(v)) = (q(v),p(v))$ otherwise. So the range of this mechanism is $R(\tilde{q},\tilde{p}) := \{ (q(v),p(v)): v_1 < \kappa\} \cup \{(q^\dag,b)\}$.
We say that the mechanism $(\tilde{q},\tilde{p})$ is IC in $V_{\ell}$ if 
\begin{equation*}
(\tilde{q}(v),\tilde{p}(v)) \in Ch(R(\tilde{q},\tilde{p});v_1) \qquad  \forall v \in V_{\ell}.
\end{equation*}

Since $\{(q(v),p(v)): v_1 < \kappa\} \subset R(q,p)$, expression (\ref{eq:vlIC}) together with definition of $(\tilde{q},\tilde{p})$ and inequality (\ref{eq:qdag3}) imply that $(\tilde{q}(v),\tilde{p}(v)) \in Ch(R(\tilde{q},\tilde{p});v_1)$ for all $v$ with $v_1 < \kappa$.
Inequality (\ref{eq:qdag4}) implies that $(\tilde{q}(v),\tilde{p}(v)) \in Ch(R(\tilde{q},\tilde{p});v_1)$ when $v_1 = \kappa$.
Therefore we conclude $(\tilde{q},\tilde{p})$ is IC in $V_{\ell}$. 
It is also IR by construction.

Define $M^{\ell}$ to be the set of all mechanisms defined on $V_{\ell} $ and that are IR and IC in $V_{\ell}$ such that
$(\tilde{q}(v),\tilde{p}(v)) = (q^\dag,b)$ for all $v$ with $v_1 = \kappa$.
The optimal mechanism in $M^{\ell}$ gives an upper bound on the expected revenue of $(q,p)$ in $V_{\ell}$.

We derive the optimal mechanism in $M^\ell$ in some steps. First, we show a lemma which further restricts
the class of mechanisms we need to consider inside $M^{\ell}$.
\begin{lemma}
\label{lem:p66}
For every $(\tilde{q},\tilde{p}) \in M^{\ell}$, there exists a mechanism $(\tilde{q}',\tilde{p}') \in M^\ell$ such that
\begin{align*}
\tilde{p}'(v_1,v_2) &= \tilde{p}'(v_1,v'_2)~\qquad~\forall~(v_1,v_2), (v_1,v'_2) \in V_{\ell} \\
\tilde{q}'(v_1,v_2) &= \tilde{q}'(v_1,v'_2)~\qquad~\forall~(v_1,v_2), (v_1,v'_2) \in V_{\ell}\\
\tilde{p}'(v) &\ge p(v)~\qquad~\forall~v \in V_{\ell}
\end{align*}
\end{lemma}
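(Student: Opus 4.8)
The plan is to take an arbitrary mechanism $(\tilde q,\tilde p)\in M^\ell$ and build $(\tilde q',\tilde p')$ by ``flattening'' the allocation and payment in the $v_2$-direction, keeping only what the agent's valuation $v_1$ dictates. Concretely, for each fixed $v_1<\kappa$ I would define $(\tilde q'(v_1,\cdot),\tilde p'(v_1,\cdot))$ to be some single outcome that the agent of type $v_1$ chooses out of the range $R(\tilde q,\tilde p)$, and set $(\tilde q'(v_1,v_2),\tilde p'(v_1,v_2))=(q^\dag,b)$ for $v_1=\kappa$ (to stay inside $M^\ell$). Since inside $V_\ell$ the relevant IC constraints are exactly the one-dimensional IC constraints of the agent (by Lemma~\ref{lem:p3}, for $v_1<\kappa$ the choice is $Ch(R(\tilde q,\tilde p);v_1)$, which ignores $v_2$), the new mechanism inherits IC automatically: it is essentially the one-dimensional ``ironed'' mechanism associated with $(\tilde q,\tilde p)$.

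The key steps in order: (i) observe that for $(\tilde q,\tilde p)\in M^\ell$ the indirect utility of the agent, $U(v_1):=\max_{(a,t)\in R(\tilde q,\tilde p)}(v_1 a - t)$, is convex and nondecreasing in $v_1$, with $U'(v_1)=\tilde q(v)$ a.e.\ along the selected outcomes; (ii) define $\tilde q'(v_1,v_2)$ to be a subgradient of $U$ at $v_1$ (monotone in $v_1$), chosen so that at $v_1=\kappa$ it equals $q^\dag$ — here I would use the envelope/monotonicity structure plus the boundary pinning $(\tilde q(\kappa,v_2),\tilde p(\kappa,v_2))=(q^\dag,b)$ to guarantee consistency; (iii) define $\tilde p'(v_1,v_2):=v_1\tilde q'(v_1,v_2)-U(v_1)$, which by construction depends only on $v_1$; (iv) verify $(\tilde q',\tilde p')$ is IC and IR on $V_\ell$ — IR because $U(v_1)\ge U(0)\ge 0$ (the $(0,0)$ outcome is available and $\tilde p'$ gives the agent weakly nonnegative surplus), IC because the assigned outcome maximizes $v_1 a - t$ over the new range by the standard convex-conjugate argument, and membership in $M^\ell$ because the $v_1=\kappa$ outcome is pinned to $(q^\dag,b)$; (v) check the payment inequality $\tilde p'(v)\ge p(v)$ pointwise. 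For (v) I would use that $\tilde p'$ is the payment of the agent-optimal outcome, hence at least the payment of whatever outcome the original $(q,p)$ assigned — more precisely, since $(q,p)$ is IC for the agent and $U$ is built as the upper envelope, $\tilde p'(v_1,v_2) = v_1 \tilde q'(v_1) - U(v_1) \ge v_1 q(v) - U(v_1) \ge p(v)$, using $U(v_1)\le v_1 q(v)-p(v)$ is the wrong direction — so actually one shows $U(v_1)= v_1\tilde q'-\tilde p'$ equals the agent's true indirect utility in $(\tilde q,\tilde p)$, which by construction is no larger than in $(q,p)$, giving $\tilde p'(v)\ge p(v)$.

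The main obstacle I anticipate is step (v), the pointwise payment comparison $\tilde p'(v)\ge p(v)$ for \emph{all} $v\in V$, not just $v\in V_\ell$. For $v\notin V_\ell$ the mechanism $(\tilde q',\tilde p')$ isn't even defined by the lemma as stated (it lives on $V_\ell$), so I expect the intended reading is that the inequality is asserted on the common domain, or that one extends $\tilde q',\tilde p'$ trivially; I would clarify this and then the comparison reduces to showing that collapsing to the agent-optimal outcome never lowers the payment, which follows because the agent-optimal outcome in a menu has the largest payment among outcomes giving the agent the same (maximal) surplus — combined with the fact that the original assigned outcome $(q(v),p(v))$ gives the agent surplus at most $U(v_1)$. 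The delicate point is the consistency of the subgradient selection at the boundary $v_1=\kappa$ with the imposed value $q^\dag$: I would invoke inequalities \eqref{eq:qdag3}--\eqref{eq:qdag4}, which say precisely that $(q^\dag,b)$ is consistent with the agent's incentive constraints at $v_1=\kappa$, to ensure the ironed allocation can be taken to pass through $q^\dag$ at $\kappa$.

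Once these pieces are in place, the conclusion is immediate: $(\tilde q',\tilde p')$ is the desired mechanism, and since within $M^\ell$ we may henceforth restrict to mechanisms whose allocation and payment depend on $v_1$ alone, the downstream optimization over $M^\ell$ becomes a genuinely one-dimensional screening problem, which is what the subsequent analysis needs. I would keep the proof short by citing Lemma~\ref{lem:p3} for the reduction to agent-only IC in $V_\ell$ and the standard envelope characterization of one-dimensional IC for the convexity/monotonicity facts, relegating the routine verification of IR and the payment inequality to one or two lines each.
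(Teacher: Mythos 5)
Your construction is essentially the paper's own proof: for each $v_1$ you select, from the agent's choice set $Ch(R(\tilde q,\tilde p);v_1)$, the outcome with the largest payment (the paper's $Ch^*$), which automatically depends only on $v_1$, preserves IC/IR and the boundary outcome $(q^\dag,b)$ at $v_1=\kappa$, and weakly raises payments because the originally assigned outcome delivers the same maximal agent surplus $U(v_1)$. The envelope/subgradient packaging and the mid-argument wobble in your step (v) are cosmetic; the final form of the payment comparison you settle on is exactly the paper's argument.
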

\begin{proof}
Pick any $(\tilde{q},\tilde{p}) \in M^\ell$.  Since $(\tilde{q},\tilde{p}) \in M^{\ell}$ is IC in $V_{\ell}$,  by definition we have $(\tilde{q}(v),\tilde{p}(v)) \in Ch(R(\tilde{q},\tilde{p});v_1)$ for all $v \in V_{\ell}$.
For every $v_1$ with $v \in V_{\ell}$, let
$Ch^*(R(\tilde{q},\tilde{p});v_1)$ be the set of outcomes defined as:
\begin{align*}
Ch^*(R(\tilde{q},\tilde{p});v_1) &:=
\{(a,t) \in Ch(R(\tilde{q},\tilde{p});v_1): t \ge t'~\forall~(a',t') \in Ch(R(\tilde{q},\tilde{p});v_1)\}
\end{align*}
By Claim \ref{cl:nempty} (in Appendix \ref{sec:rev}), $Ch^*(R(\tilde{q},\tilde{p});v_1)$ is non-empty. Note that if $v_1 \ne 0$, then
for any $(a,t), (a',t') \in Ch^*(R(\tilde{q},\tilde{p});v_1)$, we have $t=t'$ and $av_1 - t = a'v_1 - t'$. But $t=t'$
implies $a=a'$. Hence, $Ch^*(R(\tilde{q},\tilde{p});v_1)$ is a singleton if $v_1 \ne 0$.

Hence, we construct another mechanism $(\tilde{q}',\tilde{p}')$ such that
$(\tilde{q}'(v),\tilde{p}'(v))$ is assigned the unique element in $Ch^*(R(\tilde{q},\tilde{p});v_1)$ if
$v_1 \ne 0$. When $v_1=0$,
we choose $(a,t) \in Ch^*(R(\tilde{q},\tilde{p});0)$ such that $a$ is minimum across all outcomes
in $Ch^*(R(\tilde{q},\tilde{p});0)$ and set $(\tilde{q}'(v),\tilde{p}'(v))=(a,t)$.
Since $(\tilde{q},\tilde{p})$ is IC in $V_{\ell}$, $(\tilde{q}',\tilde{p}')$ is also IC in $V_{\ell}$.
Since $(\tilde{q},\tilde{p}) \in M^\ell$, $\tilde{q}(v) \le q^\dag$ and $\tilde{p}(v) \le b$
for all $v \in V^\ell$. Hence, by construction,
$\tilde{q}'(v) \le q^\dag$ and $\tilde{p}'(v) \le b$.
As a result, $(\tilde{q}',\tilde{p}') \in M^{\ell}$.
Further, by construction, $\tilde{p}'(v) \ge \tilde{p}(v)$ for all $v \in V^\ell$.
Also, by construction, $\tilde{p}'(v_1,v_2)=\tilde{p}'(v_1,v'_2)$ for all $(v_1,v_2),(v_1,v'_2) \in V^\ell$.
\end{proof}

Due to Lemma \ref{lem:p66}, we conclude that an optimal mechanism
in $M^\ell$ must belong to the following class of mechanisms:
\begin{align*}
M^{\ell\ell} &= \{(\tilde{q},\tilde{p}) \in M^\ell:
\tilde{p}(v_1,v_2) = \tilde{p}(v_1,v'_2), \tilde{q}(v_1,v_2) = \tilde{q}(v_1,v'_2)~\forall~(v_1,v_2),(v_1,v'_2) \in
V^\ell\}
\end{align*}

The expected revenue from any mechanism $(\tilde{q},\tilde{p}) \in M^{\ell \ell}$ is given by
\begin{align*}
\textsc{Rev}(\tilde{q},\tilde{p}) &= \int \limits_{V^{\ell}}\tilde{p}(v) f(v) dv = \int \limits_0^{\kappa} \tilde{p}(v_1,v_2)
\int \limits_0^{v_1}f(v_1,v_2)dv_2 dv_1 = \int \limits_0^{\kappa} \tilde{p}(v_1,v_2)f_1(v_1) dv_1
\end{align*}
where $f_1$ is the density of the marginal distribution of $v_1$. Since $(\tilde{q},\tilde{p}) \in M^{\ell \ell}$,
we let $\pi(v_1):=\tilde{p}(v_1,v_2)$ for all $(v_1,v_2) \in V^{\ell}$. Hence, the revenue
expression simplifies to
\begin{align*}
\textsc{Rev}(\tilde{q},\tilde{p}) &= \int \limits_0^{\kappa} \pi(v_1)f_1(v_1) dv_1
\end{align*}
Now, define for every $v_1 \in [0,\kappa]$, $\alpha(v_1):=\tilde{q}(v_1,v_2)$ and
\begin{align*}
\tilde{u}(v_1) &:= \alpha(v_1) v_1 - \pi(v_1)
\end{align*}
Since the IC constraints involve type of agent ($v_1$), standard envelope theorem
arguments imply for all $v_1 \in [0,\kappa]$,
\begin{align*}
\tilde{u}(v_1) &= \tilde{u}(0) + \int \limits_0^{v_1} \alpha(x)dx
\end{align*}
and $\frac{d\tilde{u}}{dx} = \alpha(x)$ for almost all $x \in [0,\kappa]$.
Using this and rewriting the revenue expression, we get
\begin{align}
  \textsc{Rev}(\tilde{q},\tilde{p}) &= \int \limits_0^{\kappa} \pi(v_1)f_1(v_1) dv_1 \nonumber \\
  &= \int \limits_0^{\kappa} \big[ \frac{d\tilde{u}}{dv_1} v_1 - \tilde{u}(v_1) \big]f_1(v_1) dv_1 \nonumber \\
  &= \kappa \tilde{u}(\kappa) f_1(\kappa) - \int \limits_0^{\kappa}\Big(v_1 \frac{df_1}{dv_1} + 2f_1(v_1)\Big) \tilde{u}(v_1)dv_1
  \label{eq:el1}
\end{align}

We now define another mechanism $(q^*,p^*)$ from $(\tilde{q},\tilde{p})$.
For every $v_1 \in [0,\kappa]$, we set $\alpha^*(v_1):=\alpha(\kappa)$ and
$\pi^*(v_1):=\pi(\kappa)$ if $\alpha(\kappa)v_1 - \pi(\kappa) \ge 0$
and assign $(\alpha^*(v_1),\pi^*(v_1)):= (0,0)$ otherwise.
Extend this to entire $V^{\ell}$ in the usual way using Lemma \ref{lem:p66}.

Clearly, this
defines an IC mechanism in $M^{\ell \ell}$. So, for
$v_1 < \frac{\pi(\kappa)}{\alpha(\kappa)}=\frac{b}{q^\dag}$, we get $u^*(v_1)=0$.
If $v_1 \ge \frac{b}{q^\dag}$, we have
$u^*(v_1)=v_1 \alpha(\kappa) - \pi(\kappa) \le \tilde{u}(v_1)$,
where the inequality follows from IC of $(\tilde{q},\tilde{p})$.
As a result, $u^*(v_1) \le \tilde{u}(v_1)$ for all $v_1 \le \kappa$ with equality
holding for $v_1=\kappa$.
Since $v_1(1-F_1(v_1))$ is concave, we know that
$v_1 \frac{df_1}{dv_1} + 2f_1(v_1) \ge 0$ for all $v_1$.
Using the expression in (\ref{eq:el1}), we conclude that
$\textsc{Rev}(q^*,p^*) \ge \textsc{Rev}(\tilde{q},\tilde{p})$.

\begin{figure}[!hbt]
\centering
\includegraphics[width=3in]{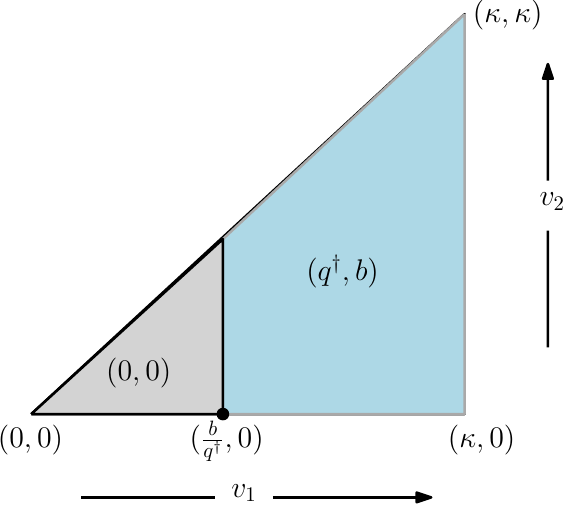}
\caption{Optimal mechanism in $M^{\ell}$ for type space $V_{\ell}$}
\label{fig:post2new}
\end{figure}

Since $(\alpha(\kappa),\pi(\kappa)) = (q^\dag,b)$ by construction, an optimal mechanism in $M^{\ell \ell}$ is described
by a cutoff $\frac{b}{q^\dag}$ such that for all types with $v_1 \in [\frac{b}{q^\dag},\kappa]$,
the object is allocated with probability $q^\dag$ and payment
$b$. The object is not allocated and payment is zero for all other types.
This is shown in Figure \ref{fig:post2new}.

\subsection{Upper bounding $V_h$}
\label{sec:veh}

Let $V^-_h:=V_h \setminus \{(v_1,\kappa):v_1 \ge \kappa\}$.
Consider a mechanism $(\hat{q},\hat{p}) : V^-_h \to [0,1] \times \mathbb{R}$ such that $(\hat{q}(v),\hat{p}(v)) = (q(v),p(v))$ for all $v \in V_h^-$. We say that the mechanism $(\hat{q},\hat{p})$ is IC in $V_h^-$ if 
\begin{equation*}
(\hat{q}(v),\hat{p}(v)) \in Ch(R(\hat{q},\hat{p});v_2)
\end{equation*}
For all $v \in V_h^-$, by definition of $V_h$, we have $p(v) > b$.
This also implies that $(q(v),p(v)) \in Ch(R(q,p);v_2)$. But since $R(\hat{q},\hat{p}) \subseteq R(q,p)$ we conclude that $(\hat{q},\hat{p})$ is IC in $V_h^-$.
This implies that for any $v \equiv (v_1,v_2), v' \equiv (v'_1,v'_2) \in V^-_h$,
with $v'_2 > v_2$, we must have
\begin{align*}
v'_2\hat{q}(v') - \hat{p}(v') &\ge v'_2 \hat{q}(v) - \hat{p}(v) \\
v_2 \hat{q}(v) - \hat{p}(v) &\ge v_2 \hat{q}(v') - \hat{p}(v')
\end{align*}
Adding them gives $\hat{q}(v') \ge \hat{q}(v)$. Using (\ref{eq:qdag2}), we get $\hat{q}(v) \ge q^\dag$ for all $v \in V^-_h$.
Further, inequality (\ref{eq:udag}) implies that for any $v \in V^-_h$, we have $v_2 \hat{q}(v) - \hat{p}(v) \ge kq^\dag - b$.

We can then extend the restriction of $(\hat{q},\hat{p})$ to $V^-_h$ to the entire $V_h$ by taking convergent sequences of
$\hat{q}(v),\hat{p}(v)$ in $V^-_h$. This will define a new mechanism $(\tilde{q},\tilde{p})$ which is IC in $V_h$ and satisfies three properties:
(i) $\tilde{p}(v) \ge b$, (ii) $\tilde{q}(v) \ge q^\dag$ and (iii)
$\tilde{u}(v):=v_2 \tilde{q}(v) - \tilde{p}(v) \ge \kappa q^\dag - b$ for all $v \in V_h$.

Hence, define $M^h$ to be the set of all mechanisms that are IC in $V_h$ such that each
$(\tilde{q},\tilde{p}) \in M^h$ satisfies $\tilde{q}(v) \ge q^\dag$ and
$\tilde{u}(v) \ge \kappa q^\dag - b$  for all $v \in V_h$.\footnote{We have ignored the constraint
$\tilde{p}(v) \ge b$ and hence, considered a larger set of mechanisms.}
The optimal mechanism in $M^h$ gives an upper bound on the expected revenue of $(q,p)$ in $V_h$.

Since the IC constraints are determined by $v_2$ for any $(v_1,v_2) \in V_h$, analogous to
Lemma \ref{lem:p66}, we can assume that for every mechanism $(\tilde{q},\tilde{p}) \in M^h$,
\begin{align*}
\tilde{q}(v_1,v_2) = \tilde{q}(v'_1,v_2) ~\qquad~\forall~(v_1,v_2),(v'_1,v_2) \in V_h \\
\tilde{p}(v_1,v_2) = \tilde{p}(v'_1,v_2) ~\qquad~\forall~(v_1,v_2),(v'_1,v_2) \in V_h
\end{align*}

Hence, we can set $\alpha(v_2)=\tilde{q}(v_1,v_2), \pi(v_2)=\tilde{p}(v_2)$ and
$u(v_2)=\tilde{u}(v_1,v_2)$ for all $v_2 \in [\kappa,\beta]$. IC constraints define a
one dimensional mechanism design problem of allocating a single object in the type
space $[\kappa,\beta]$ with IR constraints replaced by $u(v_2) \ge \kappa q^\dag - b$.
Further, the allocation probabilities lie in $[q^\dag,1]$. We know from \cite{Manelli07}
and \cite{borgers15} that the extreme points of such optimization problems assign
extreme allocation probabilities to each type, i.e., object is either given with probability
$q^\dag$ or $1$ at every type and the lowest type is assigned utility equal to $\kappa q^\dag - b$.

\begin{figure}[!hbt]
\centering
\includegraphics[width=3in]{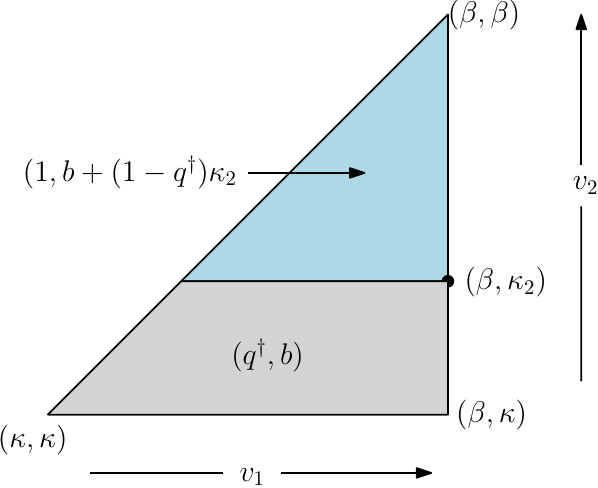}
\caption{Optimal mechanism in $M^{h}$ for type space $V_h$}
\label{fig:post1new}
\end{figure}

Hence, the expected revenue maximizing mechanism $(q^*,p^*)$ in $M^h$ must involve a cutoff $\kappa_2 \in [\kappa,\beta]$,
such that for all $(v_1,v_2) \in V_h$ with $v_2 \ge \kappa_2$, we have $q^*(v_1,v_2) = 1$
and for all $(v_1,v_2) \in V_h$ with $v_2 < \kappa_2$, we have $q^*(v_1,v_2) = q^\dag$.
Using the revenue equivalence (envelope theorem) arguments and the fact that
lowest possible utility is $\kappa q^\dag - b$, we conclude that
$p^*(v)=b$ for all $v \in V_h$ with $v_2 < \kappa_2$
and $p^*(v) = b + (1-q^\dag)\kappa_2$ for all $v \in V_h$ with $v_2 \ge \kappa_2$.
This is shown in Figure \ref{fig:post1new}.

The expected revenue from such a mechanism is
\begin{align}\label{eq:revuh}
b (1-F_2(\kappa)) + (1-F_2(\kappa_2))(1-q^\dag)\kappa_2
\end{align}
where $F_2$ is the marginal cdf of $v_2$.
Hence, an upper bound on the expected revenue of $(q,p)$ from the region $V_h$ is given by
a $\kappa^*_2$ such that an allocation probability $q^\dag$ at price $b$ is given to all
types in $V_h$, but the additional allocation probability $(1-q^\dag)$ is given to types
$v$ with $v_2 \ge \kappa^*_2$ at price $(1-q^\dag)\kappa^*_2$. The value of $\kappa^*_2$
solves
\begin{align}\label{eq:kappa2opt}
\max_{\kappa_2 \in [\kappa,\beta]} \kappa_2 (1-F_2(\kappa_2))
\end{align}

\subsection{Upper bounding $V_m$ and proof of Theorem \ref{theo:main}}

Now, consider the following {\sl post-2} mechanism $(q^*,p^*)$:
\begin{align*}
(q^*(v),p^*(v)) =
\begin{cases}
(0,0) & \textrm{if}~v_1 < \frac{b}{q^\dag} \\
\Big(1,b + \kappa^*_2 \big(1-q^\dag\big)\Big) & \textrm{if}~v_2 > \kappa^*_2\\
\big(q^\dag,b\big) & \textrm{otherwise}
\end{cases}
\end{align*}
where $\kappa^*_2$ is as defined in (\ref{eq:kappa2opt}). This is shown in Figure \ref{fig:post2}.

\begin{figure}[!hbt]
\centering
\includegraphics[width=3in]{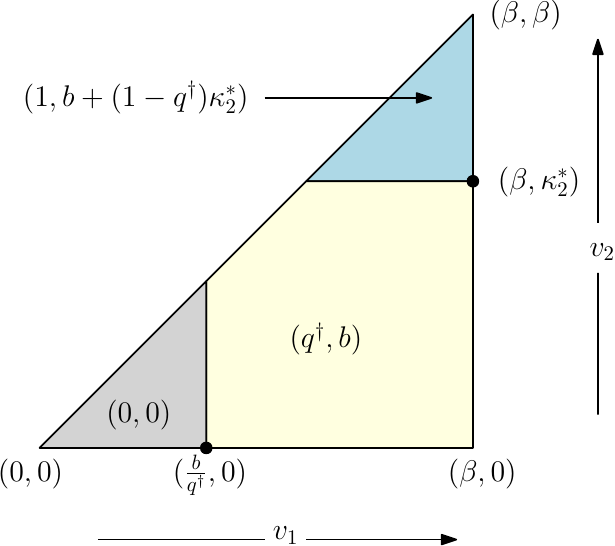}
\caption{A {\sc post-2} mechanism that revenue dominates $(q,p)$}
\label{fig:post2}
\end{figure}

Clearly, the {\sc post-2}
mechanism $(q^*,p^*)$ coincides with the mechanisms achieving the upper bound
in regions $V_{\ell}$ and $V_h$. The types in the $V_m$ region do not pay more
than $b$ in the mechanism $(q,p)$. Hence, this {\sc post-2} mechanism generates
greater expected revenue than $(q,p)$. This concludes the proof.

\subsection{\sc{post-1} is optimal in $M^-$}

\begin{lemma}
\label{lem:p4}
For any type $v \equiv (v_1,v_2)$ and any $X \subseteq Z$,
if $(a',t') \in Ch(X;v_2)$ and $t' > b$, then for any $(a,t) \in X$ with $t \le b$,
we have $v_1 a' - t' \ge v_1 a - t$ and the inequality is strict if $v_1 > v_2$.
\end{lemma}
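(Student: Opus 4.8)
The statement is a monotonicity-type comparison between the principal's optimal outcome in $X$ and any budget-feasible outcome, evaluated at the agent's (weakly higher) value. The plan is to start from the defining inequality of $Ch(X;v_2)$, rearrange it into a statement about the allocation difference, use the budget strictness $t'>b\ge t$ to deduce that the principal's outcome has strictly higher allocation, and then leverage $v_1\ge v_2$ to transfer the inequality to the agent's valuation.

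First I would fix $(a',t')\in Ch(X;v_2)$ with $t'>b$ and an arbitrary $(a,t)\in X$ with $t\le b$. By definition of $Ch(X;v_2)$ we have $v_2 a' - t' \ge v_2 a - t$, which I rewrite as $v_2(a'-a)\ge t'-t$. The second step is to observe that $t'-t>0$, since $t'>b\ge t$; hence $v_2(a'-a)>0$, which forces both $v_2>0$ and $a'>a$. This ``$a'>a$'' observation is the only real content of the argument.

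Given $a'-a>0$ and $v_1\ge v_2$, I get $v_1(a'-a)\ge v_2(a'-a)\ge t'-t$, i.e.\ $v_1 a' - t' \ge v_1 a - t$, which is the desired inequality. For the strict part, if $v_1>v_2$ then $v_1(a'-a)>v_2(a'-a)\ge t'-t$ (using $a'-a>0$ again), so the inequality is strict.

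I do not expect any genuine obstacle here; the proof is a couple of lines of algebra. The only point requiring a moment's care is making sure the budget gap $t'>b\ge t$ is used to get \emph{strict} positivity of $t'-t$, so that $a'>a$ (not merely $a'\ge a$) is available when multiplying by $v_1-v_2\ge 0$; this is exactly what delivers the strict conclusion under $v_1>v_2$.
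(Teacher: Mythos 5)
Your argument is correct and is essentially identical to the paper's own proof: both derive $v_2(a'-a) \ge t'-t > 0$ from the choice condition and the budget gap, conclude $a'>a$, and then multiply through by $v_1 \ge v_2$ to get the weak inequality, with strictness when $v_1 > v_2$. No differences worth noting.
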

\begin{proof}
Since $(a',t') \in Ch(X;v_2)$ and $t' > b \ge t$ we have $(a'-a) v_2 \ge t' - t > 0$.
This implies $a' > a$. Since $v_1 \ge v_2$, we get $(a'-a)v_1 \ge t'-t$ with strict inequality
holding if $v_1 > v_2$.
\end{proof}

\begin{lemma}
  \label{lem:p5}
If $(q,p) \in M^-$, then there exists another
$(q',p') \in M^-$ such that for every $v$ we have
$p'(v) \le b$ for all $v$ and $(q(v),p(v)) = (q'(v),p'(v))$
for all $v \ne (\beta,\beta)$.
\end{lemma}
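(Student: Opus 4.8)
The plan is to show that every mechanism in $M^-$ already charges at most $b$ at every type except possibly the top type $(\beta,\beta)$, and then to repair the mechanism only at $(\beta,\beta)$ by a surgery that does not disturb incentives anywhere else. So the first step is to prove $V^+(q,p)\subseteq\{(\beta,\beta)\}$.

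Write $X:=R(q,p)$ and let $\kappa:=\kappa_{(q,p)}$ be the threshold from Lemma~\ref{lem:p3}; I would first argue $\kappa=\beta$. If $\kappa<\beta$, take any $v$ with $v_1>v_2>\kappa$: Lemma~\ref{lem:p3} gives $Ch(X;v_2)\subseteq X\setminus X_b$, and since $v_1>v_2$, Lemma~\ref{lem:p4} (comparing any element of $Ch(X;v_2)$ with any element of $X_b\supseteq Ch(X_b;v_1)$) yields $Ch(X;v_2)\rhd_{v_1}Ch(X_b;v_1)$, hence $Ch(X;v)=Ch(X;v_2)$ and $p(v)>b$. Since $\{v\in V:v_1>v_2>\kappa\}$ has positive Lebesgue measure when $\kappa<\beta$, this contradicts $(q,p)\in M^-$, so $\kappa=\beta$. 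Now if $p(v)>b$, incentive compatibility excludes $Ch(X;v)=Ch(X_b;v_1)$ (that set lies in $X_b$), so $(q(v),p(v))\in Ch(X;v_2)$ with payment above $b$; Lemma~\ref{lem:p3} with $\kappa=\beta$ then forces $v_2=\beta$, and $v_1\ge v_2$ forces $v_1=\beta$. Thus $V^+(q,p)\subseteq\{(\beta,\beta)\}$, so $(q,p)$ already satisfies $p(v)\le b$ for all $v\ne(\beta,\beta)$.

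If in addition $p(\beta,\beta)\le b$, then $(q',p')=(q,p)$ works; otherwise write $(\bar a,\bar t):=(q(\beta,\beta),p(\beta,\beta))$ with $\bar t>b$. Because every other type pays at most $b$, the outcome $(\bar a,\bar t)$ is attained only at $(\beta,\beta)$, so $X\setminus X_b=\{(\bar a,\bar t)\}$, i.e.\ $X_b=X\setminus\{(\bar a,\bar t)\}=:X^-$. Define $(q',p')$ to coincide with $(q,p)$ off $(\beta,\beta)$ and assign at $(\beta,\beta)$ the outcome $(q(\beta,v_2^0),p(\beta,v_2^0))$ for a fixed $v_2^0\in[0,\beta)$; its payment is at most $b$, so it lies in $X^-$, and one checks $R(q',p')=X^-$. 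Then $V^+(q',p')=\varnothing$ and $(0,0)\in X^-=R(q',p')$, so IR holds and $(q',p')\in M^-$ once IC is checked.

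For IC: at $(\beta,\beta)$, since all of $X^-$ is within budget, $Ch(X^-;(\beta,\beta))=Ch(X^-;\beta)=Ch(X_b;\beta)$, and $(q(\beta,v_2^0),p(\beta,v_2^0))\in Ch(X_b;\beta)$ follows from IC of $(q,p)$ at $(\beta,v_2^0)$ together with $v_2^0<\beta=\kappa$ and the definition of $Ch(X;\cdot)$ (via Lemma~\ref{lem:p3}). At any $v\ne(\beta,\beta)$ we have $v_2<\beta$, so $Ch(X;v_2)\subseteq X_b$ by Lemma~\ref{lem:p3}, whence $(\bar a,\bar t)\notin Ch(X;v_2)$; since $X^-_b=X_b$ and the standard correspondences $Ch(\cdot;v_1)$ and $Ch(\cdot;v_2)$ satisfy independence of irrelevant alternatives, deleting $(\bar a,\bar t)$ changes neither $Ch(\cdot;v_2)$ nor $Ch(\cdot;v_1)$ on the budget set, so $Ch(X^-;v)=Ch(X;v)\ni(q(v),p(v))=(q'(v),p'(v))$. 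I expect the main obstacle to be precisely this transfer of incentive compatibility: since $Ch(X;v)$ itself fails independence of irrelevant alternatives, one cannot reason directly with it and must route the argument through the two well-behaved one-dimensional choice correspondences, and one must verify that the deleted outcome really lies outside every relevant choice set — which is exactly where $\kappa=\beta$ is needed.
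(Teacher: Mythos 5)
Your proof is correct and follows essentially the same route as the paper's: both use Lemmas \ref{lem:p3} and \ref{lem:p4} to propagate a payment above $b$ at any type with $v_2<\beta$ to a positive-measure set of types, forcing $V^+(q,p)\subseteq\{(\beta,\beta)\}$, and then repair the mechanism only at $(\beta,\beta)$. Your replacement outcome $(q(\beta,v_2^0),p(\beta,v_2^0))\in Ch(X_b;\beta)$ is a slightly cleaner choice than the paper's selection from $Ch(\bar{X}';\beta)$ (the closure of the pruned range), since it stays inside the existing range and makes the incentive-compatibility check at $(\beta,\beta)$ immediate.
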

\begin{proof}
Let $R(q,p)=X$. For any $v \equiv (v_1,v_2)$ with $v_2 < \beta$, suppose $p(v) > b$.
By Lemma \ref{lem:p3}, for all $v'_2 > v_2$, for any $(a',t') \in Ch(X;v'_2)$ we have $t > b$.

Take any $v' \equiv (v'_1,v'_2) \in V$ and suppose
$(a,t) \in Ch(X_b;v'_1)$. By Lemma \ref{lem:p4}, if $v'_1 > v'_2$,
we have $v'_1 a' - t' > v'_1 a - t$. Hence, for all $v' \equiv (v'_1,v'_2)$
such that $v'_1 > v'_2 > v_2$, we have $Ch(X;v') = Ch(X;v_2)$.
Hence, $p(v') > b$. Since $v_2 < \beta$, this shows that the set of
$v'$ with $p(v') > b$ has positive Lebesgue measure.
This contradicts the fact that $(q,p) \in M^-$.

Hence, we conclude that for any $v \equiv (v_1,v_2)$ with
$p(v) > b$, we must have $v_1=v_2=\beta$. Then, consider
$X':=X \setminus \{q(\beta,\beta),p(\beta,\beta)\}$. Denote by
$\bar{X}'$ the closure of $X'$. Pick some $(a,t) \in Ch(\bar{X}',\beta)$
and define $q'(\beta,\beta):=a$ and $p'(\beta,\beta)=t$.
By definition, $t \le b$. Further, let $(q'(v),p'(v)) = (q(v),p(v))$
for all $v \ne (\beta,\beta)$.
It is easily verified that $(q',p')$ is an IC and IR mechanism (since $(q,p)$ is IC and IR).
Further, $p'(v) \le b$ for all $v$.
\end{proof}

Hence, we can assume that the mechanism $(q,p)$ satisfies $p(v) \le b$ for all $v$.
By mimicking the arguments in Section \ref{sec:vell}, we can conclude that
the expected revenue of $(q,p)$ is dominated by an IC and IR mechanism $(\tilde{q},\tilde{p})$ such that for some $\kappa_1 \in [0,\beta]$,
we have $\tilde{q}(v)=q(\beta,0)$ and $\tilde{p}(v)=p(\beta,0)=\kappa_1 q(\beta,0)$ for all $v$ with $v_1 \ge \kappa_1$.
And $(\tilde{q}(v),\tilde{p}(v)) \equiv (0,0)$ for all other $v$.

So, every optimal mechanism in $M^-$ is characterized
by a posted-price $\kappa_1$ and an allocation probability $\alpha$, where the object
is allocated with this probability at price $\alpha \kappa_1$ if the value of the agent is above
the posted price. The optimal such posted price is the solution to the following program:
\begin{align*}
\max_{\kappa_1,\alpha} \kappa_1 \alpha (1-F_1(\kappa_1)) \\
\textrm{subject to}~\alpha \kappa_1 \le b,~~\alpha \in [0,1]
\end{align*}

We argue that the optimal solution to this program must have $\alpha=1$. To see this, let $\kappa^*_1$ be the unique solution to the following
optimization
$$\max_{\kappa_1 \in [0,b]} \kappa_1 (1-F_1(\kappa_1)).$$
The fact that this optimization program has a unique solution follows from the fact that $x - xF_1(x)$ is strictly concave. Hence, the revenue from the solution when $\alpha=1$ is $\kappa^*_1(1-F_1(\kappa^*_1))$.
Now, suppose the optimal solution has $\hat{\kappa}_1$ and $\hat{\alpha}_1$. Note that the $\hat{\kappa}_1\hat{\alpha} \le b$. So, define
$\tilde{\kappa}_1=\hat{\kappa}_1\hat{\alpha} \le b$. By definition,
\begin{align*}
\kappa^*_1(1-F_1(\kappa^*_1)) &\ge \tilde{\kappa}_1 (1-F_1(\tilde{\kappa}_1)) \\
&= \hat{\kappa}_1\hat{\alpha} (1-F_1(\hat{\kappa}_1\hat{\alpha})) \\
&\ge \hat{\kappa}_1\hat{\alpha} (1-F_1(\hat{\kappa}_1)),
\end{align*}
where the final inequality used the fact that $F_1(\hat{\kappa}_1\hat{\alpha}) \le F_1(\hat{\kappa}_1)$. This implies that the optimal solution
must have $\alpha=1$ and $\kappa_1$ must be the unique solution to $\kappa_1(1-F_1(\kappa_1))$ with the constraint $\kappa_1 \in [0,b]$. Hence, the optimal solution in $M^-$
must be a {\sc post-1} mechanism, where the posted price is a unique solution to
$$\max_{\kappa_1 \in [0,b]}\kappa_1(1-F_1(\kappa_1)).$$

\section{Proofs of Theorems \ref{theo:th4} and \ref{theo:uniform}}
\label{sec:3.1proofs}
In this appendix, we prove the theorems that describe optimal {\sc post-2} mechanism.

\subsection{Proof of Theorem \ref{theo:th4}}
Consider the optimal {\sc post-2} mechanism given by $(\kappa^*_1, \kappa^*_2)$. If $b = \kappa^*_1$, the expected revenue of the optimal {\sc post-2} mechanism is $b(1-F_1(b))$, which is also the expected revenue from the {\sc post-1} mechanism with price $b$. Since the optimal {\sc post-2} mechanism
generates strictly higher expected revenue than the optimal {\sc post-1} mechanism,
we conclude that $b < \kappa^*_1$.  A similar argument implies that $\kappa^*_2 < \beta$. Therefore, Theorem \ref{theo:main} and conditions of this theorem imply that either a uniform {\sc post-2} mechanism with $b < \kappa^*_1 = \kappa^*_2 < \beta$
or an interior {\sc post-2} mechanism is optimal.

We now prove three claims before proceeding with the proof.

\begin{claim}
\label{cl:cc1}
If an optimal {\sc post-2} mechanism is a uniform {\sc post-2} mechanism with $\kappa_1=\kappa_2=\kappa^*$, then
$\kappa^* \ge \widetilde{\kappa}_2$.
\end{claim}
\begin{proof}[of Claim~\ref{cl:cc1}]
Assume for contradiction $\kappa^* < \widetilde{\kappa}_2$. Then, the expected revenue from this mechanism is
\begin{align*}
b (1-F_1(\kappa^*)) + (1-\frac{b}{\kappa^*})\kappa^*(1-F_2(\kappa^*)) &< b (1-F_1(\kappa^*)) + (1-\frac{b}{\kappa^*})\widetilde{\kappa}_2(1-F_2(\widetilde{\kappa}_2)),
\end{align*}
where the inequality follows from the definition of $\widetilde{\kappa}_2$ and $F_2$ satisfying SC.
But the RHS is the expected revenue from the interior {\sc post-2} mechanism $(\kappa_1=\kappa^*, \kappa_2=\widetilde{\kappa}_2)$.
This contradicts the optimality of the uniform {\sc post-2} mechanism.
\end{proof}

\begin{claim}
  \label{cl:cc2}
If an optimal {\sc post-2} mechanism is a uniform {\sc post-2} mechanism with $\kappa_1=\kappa_2=\kappa^*$, then
$\kappa^*$ is the solution to
\begin{align*}
xf_2(x) - (1-F_2(x)) &= b \Big( f_2(x) - f_1(x) \Big),
\end{align*}
and $\frac{f_1(\kappa^*)}{f_2(\kappa^*)} \le 1$.
\end{claim}
\begin{proof}[of Claim~\ref{cl:cc2}]
Since $\kappa^* \in (b,\beta)$, it must be a solution
to the first order condition. The expected revenue from an arbitrary uniform {\sc post-2}
mechanism $\kappa=\kappa_1=\kappa_2$ is
\begin{align*}
b(1-F_1(\kappa))+ (\kappa-b) (1-F_2(\kappa))
\end{align*}
The first order condition at $\kappa^*$ is
\begin{align}
-bf_1(\kappa^*) + (1-F_2(\kappa^*)) - (\kappa^*-b)f_2(\kappa^*) &= 0 \nonumber \\
\Longleftrightarrow \kappa^* - \frac{1-F_2(\kappa^*)}{f_2(\kappa^*)} &= b \Big( 1 - \frac{f_1(\kappa^*)}{f_2(\kappa^*)}\Big) \label{eq:k11}
\end{align}
Using Claim \ref{cl:cc1}, $\kappa^* \ge \widetilde{\kappa}_2$. Since $F_2$ satisfies
SC, LHS of (\ref{eq:k11}) is non-negative. Hence, $\frac{f_1(\kappa^*)}{f_2(\kappa^*)} \le 1$.
Hence, the value of
$\kappa^*$ is the solution to the first order condition (\ref{eq:k11}).
\end{proof}

\begin{claim}
\label{cl:interior}
If the optimal {\sc post-2} mechanism $(\kappa^*_1,\kappa^*_2)$ is an interior {\sc post-2} mechanism, then
\begin{align*}
\kappa^*_2 &= \widetilde{\kappa}_2
\end{align*}
and $\kappa^*_1$ is the solution to
\begin{align*}
x^2f_1(x) &= \kappa^*_2(1-F_2(\kappa^*_2))
\end{align*}
\end{claim}

\begin{proof}[of Claim~\ref{cl:interior}]
The expected revenue generated
from an arbitrary {\sc post-2} mechanism $(\kappa_1,\kappa_2)$ is given by
\begin{align}\label{eq:p2rev}
b (1-F_1(\kappa_1)) + (1-\frac{b}{\kappa_1})\kappa_2 (1-F_2(\kappa_2)) &= b (1-F_1(\kappa_1)) + (1-\frac{b}{\kappa_1})\textsc{Rev}(\kappa_2),
\end{align}
where $\textsc{Rev}(\kappa_2):=\kappa_2 (1-F_2(\kappa_2))$.
So, $(\kappa^*_1,\kappa^*_2)$ maximizes expression (\ref{eq:p2rev}) under the
constraint that $b \le \kappa_1 \le \kappa_2 \le \beta$.

Fixing $\kappa^*_1$, by (\ref{eq:p2rev}), $\kappa^*_2$ must be a solution to
$\max_{x \in [\kappa^*_1,\beta]}x(1-F_2(x))$. Since $F_2$ satisfies SC, then
the expression $x(1-F_2(x))$ is increasing till $\widetilde{\kappa}_2$ and
decreasing after that.
Hence, either $\kappa^*_2=\kappa^*_1$ or $\kappa^*_2=\widetilde{\kappa}_2$.
Since $\kappa^*_1 < \kappa^*_2 < \beta$, we conclude that $\kappa^*_2 = \widetilde{\kappa}_2$.

We now denote $\textsc{Rev}_2:= \kappa^*_2(1-F_2(\kappa^*_2))$. Hence, the expected revenue
of any {\sc post-2} mechanism $(\kappa_1,\kappa^*_2)$ is given by
\begin{align}\label{eq:p2rev2}
b (1-F_1(\kappa_1)) + (1-\frac{b}{\kappa_1})\textsc{Rev}_2
\end{align}

Since $b < \kappa_1^* < \kappa^*_2$, we conclude that $\kappa_1^*$ must satisfy first order
conditions of the expression in (\ref{eq:p2rev}). Differentiating with respect to $\kappa_1$,
the expression in (\ref{eq:p2rev2}), we get
\begin{align}\label{eq:p2rev3}
-bf_1(\kappa_1) + \frac{b}{(\kappa_1)^2} \textsc{Rev}_2 = \frac{b}{(\kappa_1)^2}\Big(\textsc{Rev}_2 - (\kappa_1)^2f_1(\kappa_1)\Big)
\end{align}
We argue that $x^2f_1(x)$ is strictly increasing. This is because its derivative $2xf_1(x)+x^2\frac{df_1}{dx}=x(2f_1(x)+x\frac{df_1}{dx}) > 0$
by strict concavity of $x(1-F_1(x))$. Hence, expression (\ref{eq:p2rev3}) is positive
till $\kappa_1$ is such that $\textsc{Rev}_2=(\kappa_1)^2 f_1(\kappa_1)$ and negative
after that. This means there is a unique optimum to the revenue expression in (\ref{eq:p2rev2}),
which is given by the solution to $x^2 f_1(x)=\textsc{Rev}_2$.
\end{proof}

Now, we proceed to the proofs of parts (1) and (2). \\

\noindent {\sc Proof of Part (1).} Suppose $\frac{f_1(\widetilde{\kappa}_2)}{f_2(\widetilde{\kappa}_2)} \le 1$. Assume
for contradiction that an interior {\sc post-2} mechanism
$(\kappa^*_1,\kappa^*_2)$ is an optimal mechanism. By Claim \ref{cl:interior},
$\kappa^*_2=\widetilde{\kappa}_2$ and
\begin{align*}
(\kappa^*_1)^2 f_1(\kappa^*_1) &= \widetilde{\kappa}_2 (1-F_2(\widetilde{\kappa}_2)) = (\widetilde{\kappa}_2)^2 f_2(\widetilde{\kappa}_2),
\end{align*}
where the second equality comes from the optimality of $\widetilde{\kappa}_2$ to $\max x (1-F_2(x))$.
Since $x^2 f_1(x)$ is strictly increasing and $\widetilde{\kappa}_2 > \kappa^*_1$, we conclude that
\begin{align*}
(\widetilde{\kappa}_2)^2 f_1(\widetilde{\kappa}_2) > (\widetilde{\kappa}_2)^2 f_2(\widetilde{\kappa}_2)
\end{align*}
As a result, we get
\begin{align*}
\frac{f_1(\widetilde{\kappa}_2)}{f_2(\widetilde{\kappa}_2)} > 1
\end{align*}
which is a contradiction to our assumption. Hence, a uniform {\sc post-2} mechanism is an optimal mechanism. Using Claim \ref{cl:cc2} proves equation (\ref{eq:part1}).\\

\noindent {\sc Proof of Part (2).} Assume for contradiction that a uniform {\sc post-2} mechanism is optimal. Since $\frac{f_1(\widetilde{\kappa}_2)}{f_2(\widetilde{\kappa}_2)} > 1$, by MLRP and Claim \ref{cl:cc1},
$\frac{f_1(\kappa^*)}{f_2(\kappa^*)} > 1$. This contradicts Claim \ref{cl:cc2}. Hence, an
optimal {\sc post-2} mechanism must be an interior {\sc post-2} mechanism.
Using Claim \ref{cl:interior} proves equations (\ref{eq:k2star}) and (\ref{eq:k1star}).

\subsection{Proof of Theorem \ref{theo:uniform}}
  We prove the three parts of Theorem \ref{theo:uniform}.
  By Theorem \ref{theo:main}, the optimal mechanism is either a {\sc post-1} or {\sc post-2}
  mechanism. By Theorem \ref{theo:post2}, the optimal {\sc post-2} mechanism generates
  weakly higher expected revenue than the optimal {\sc post-1} mechanism (since $b \le \widetilde{\kappa}_1$).

  \noindent {\sc Proof of (1).} Since $b \le \widetilde{\kappa}_1$, the  optimal {\sc post-1} mechanism has a price of $b$ with expected revenue $b(1-F_1(b))$.
  Now, since $b < \widetilde{\kappa}_2$ and MLRP holds, by $\frac{f_1(\widetilde{\kappa}_2)}{f_2(\widetilde{\kappa}_2)} \le 1$,
  we have
  \begin{align}
  f_1(b) \le f_2(b) \label{eq:f12}
  \end{align}
  Now, consider a uniform {\sc post-2} mechanism and the derivative of its expected revenue from (\ref{eq:k11})
  at $\kappa^*=b$:
  \begin{align*}
  -bf_1(b) + (1-F_2(b)) & \ge -bf_2(b) + (1-F_2(b)) > 0
  \end{align*}
  where the first inequality follows from (\ref{eq:f12}) and the second follows from the fact that $F_2$
  satisfies SC and $b < \widetilde{\kappa}_2$.
  So, the expected revenue is increasing in $\kappa^*$ at $b$.
  Hence, the optimal uniform {\sc post-2} mechanism
  has $\kappa^* > b$ and generates more expected revenue than $b(1-F_1(b))$, the optimal {\sc post-1} mechanism.
  By Theorem \ref{theo:th4}, the result then follows.

  \noindent {\sc Proof of (2).} We show that the expected revenues are the same in the
  optimal {\sc post-1} and optimal {\sc post-2} mechanism.
  Assume for contradiction that the optimal {\sc post-2} mechanism generates
  strictly higher expected payoff than the optimal {\sc post-1} mechanism.
  Since $\frac{f_1(\widetilde{\kappa}_2)}{f_2(\widetilde{\kappa}_2)} > 1$ the optimal
  mechanism is an interior {\sc post-2} mechanism by Theorem \ref{theo:th4}. This implies that if the optimal interior {\sc post-2} mechanism is $\kappa_1=\kappa^*_1$ and
  $\kappa_2=\kappa^*_2$, we have $b < \kappa^*_1 < \kappa^*_2 < \beta$.
  By Claim \ref{cl:interior}, $\kappa^*_2=\widetilde{\kappa}_2$. This implies $b < \widetilde{\kappa}_2$, which
  contradicts the assumption of (2).

  Hence, the optimal {\sc post-2} mechanism must generate the same payoff as the
  optimal {\sc post-1} mechanism, which is $b(1-F_1(b))$. This is also the
  revenue of the uniform {\sc post-2} mechanism with price $\kappa^*=b$.
  Thus, the optimal {\sc post-2} mechanism is a uniform {\sc post-2} mechanism
  with price $b$ and generates the same expected revenue as the optimal {\sc post-1} mechanism.

  \noindent {\sc Proof of (3).} We show that the expected revenues are the same in the optimal {\sc post-1} and optimal {\sc post-2} mechanism. Assume for contradiction that the optimal {\sc post-2} mechanism generates
  strictly higher expected payoff than the optimal {\sc post-1} mechanism.
  Since $\frac{f_1(\widetilde{\kappa}_2)}{f_2(\widetilde{\kappa}_2)} \le 1$,
  by Theorem \ref{theo:th4}, the optimal mechanism is a uniform {\sc post-2}
  mechanism. By Claim \ref{cl:cc2}, $\frac{f_1(\kappa^*)}{f_2(\kappa^*)} \le 1$.
  Since $\frac{f_1(b)}{f_2(b)} > 1$ and $\kappa^* > b$, MLRP implies that
  $\frac{f_1(\kappa^*)}{f_2(\kappa^*)} > 1$, a contradiction.
  Hence, the optimal {\sc post-2} mechanism must generate the same payoff as the
  optimal {\sc post-1} mechanism, which is $b(1-F_1(b))$. This is also the
  revenue of the uniform {\sc post-2} mechanism with price $\kappa^*=b$.
  Thus, the optimal {\sc post-2} mechanism is a uniform {\sc post-2} mechanism
  with price $b$ and generates the same expected revenue as the optimal {\sc post-1} mechanism.

\section{Revelation principle}
\label{sec:rev}

In this appendix, we establish a revelation principle which allows us to work
with direct mechanism.
A mechanism is specified by a message space $M$ and an outcome function $\mu:M \rightarrow Z$.
Define the range of the mechanism $(M,\mu)$ as
\begin{align*}
R(M,\mu) &= \{\mu(m): m \in M\}
\end{align*}
Suppose the type space is $V \subseteq [0,\beta]^2$.
A strategy in mechanism $(M,\mu)$ is a map $s: V \rightarrow M$.
Strategy $s$ is an {\bf equilibrium} if for every $v \in V$,
\begin{align*}
\mu(s(v)) \in Ch(R(M,\mu);v)
\end{align*}

The direct mechanism $(V,\mu^*)$ is {\bf incentive compatible} if for each $v \in V$,
\begin{align*}
\mu^*(v) \in Ch(R(V,\mu^*);v)
\end{align*}

For any pair of distinct outcomes $(a,t)$ and $(a',t')$, we say $(a,t)$ transfer-dominates $(a',t')$,
written as $(a,t) \succ_{tr} (a',t')$ if $t > t'$.

\begin{theorem}
  \label{theo:rev}
If $s$ is an equilibrium in $(M,\mu)$, then there exists an incentive compatible direct mechanism $(V,\mu^*)$
such that $\mu^*(v) = \mu(s(v))$ or $\mu^*(v) \succ_{tr} \mu(s(v))$ for all $v \in V$.
\end{theorem}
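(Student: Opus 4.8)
The plan is to turn the given indirect mechanism into the canonical direct mechanism that hands each type its \emph{revenue-maximal} equilibrium-consistent outcome, and then to verify incentive compatibility by a case split on whether the type approaches the principal. Write $X:=R(M,\mu)$. I would begin with two harmless reductions. (i) We may delete from $X$ every outcome that is weakly dominated by another outcome of $X$ (weakly larger allocation at a weakly smaller transfer): such a deletion changes neither $Ch(X;v_i)$ for any $v_i$ — a dominating outcome is weakly preferred at every nonnegative value and strictly preferred whenever the domination is strict, so a dominated outcome never lies in any $Ch(X;v_i)$ — nor, since "dominated in $X$'' and "dominated in $X_b$'' coincide for in-budget outcomes, the sets $Ch(X_b;v_1)$, nor the approach condition; hence it leaves $Ch(X;v)$ and the set of equilibria untouched. (ii) For the $v_1=0$ ties we use the minimal-allocation tie-break of Lemma~\ref{lem:p66}. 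With $X$ so normalized, I would set $\mu^*(v)$ to be the transfer-maximal element of $Ch(X;v)$; this is well defined by the argument behind Claim~\ref{cl:nempty}, and since $\mu(s(v))\in Ch(X;v)$ we get $\mu^*(v)=\mu(s(v))$ or $\mu^*(v)\succ_{tr}\mu(s(v))$, which is the last clause of the theorem.

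It then remains to show $(V,\mu^*)$ is IC, i.e.\ $\mu^*(v)\in Ch(X^*;v)$ with $X^*:=R(V,\mu^*)\subseteq X$. I would first record two elementary facts. First, each single-agent correspondence $Ch(\cdot;v_i)$ maximizes a linear objective, so it satisfies IIA: any member of $Ch(X;v_i)$ that lies in $X^*$ also lies in $Ch(X^*;v_i)$; in particular $\mu^*(v)\in Ch(X^*;v_i)$ whenever $\mu^*(v)\in Ch(X;v_i)$, and the agent's best in-budget payoff is weakly lower in $X^*$ than in $X$, with equality whenever $\mu^*(v)\in X^*_b$. Second, the structural consequence of $v_1\ge v_2$ (a variant of Lemma~\ref{lem:p4}): within $Ch(X;v_2)$ the transfer is increasing in the allocation, so the transfer-maximal element of $Ch(X;v_2)$ is also the one the agent most prefers, and any element of $Ch(X;v_2)$ with transfer above $b$ gives the agent at least her best in-budget payoff, strictly if $v_1>v_2$. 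Now fix $v$. If the agent approaches in $(M,\mu)$, then $\mu^*(v)$ is the agent-optimal element of $Ch(X;v_2)$; it survives into $Ch(X^*;v_2)$ by IIA, and since the original comparison $\rhd_{v_1}$ was strict, it beats every in-budget outcome of $X$, hence of $X^*$, strictly; as shrinking to $X^*_b$ only lowers the in-budget payoff, we get $Ch(X^*;v_2)\rhd_{v_1}Ch(X^*_b;v_1)$, so $Ch(X^*;v)=Ch(X^*;v_2)\ni\mu^*(v)$. If the agent does not approach in $(M,\mu)$, then $\mu^*(v)$ has transfer at most $b$, lies in $X^*_b$, and still realizes the agent's best in-budget payoff, so $\mu^*(v)\in Ch(X^*_b;v_1)$; what must be ruled out is that truncation has turned $v$ into an approaching type, i.e.\ one needs NOT$[\,Ch(X^*;v_2)\rhd_{v_1}Ch(X^*_b;v_1)\,]$.

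This last point is the crux, and it is exactly where the failure of IIA for $Ch(X;\cdot)$ bites: deleting an outcome from $X$ could in principle remove a principal-optimal but agent-\emph{sub}optimal ``anchor'' that was the sole reason the agent stayed put. The argument I plan to run is that no such anchor is ever lost. Because $v$ does not approach, the second fact forces $Ch(X;v_2)$ to contain an in-budget witness $(a^-,t^-)$ whose agent payoff is strictly below her best in-budget payoff; since $X$ was pruned, $(a^-,t^-)$ is undominated, hence lies on the frontier of both $X_b$ and $X$ and is supported there by $v_2$. One then shows that the (nonempty) set of values supporting $(a^-,t^-)$ in $X$ contains a $v'$ for which the diagonal type $(v',v')$ does not approach and selects precisely an in-budget element of the relevant support set, so that a bad-for-$v_1$ in-budget outcome survives in $X^*$; the delicate sub-case is when $(a^-,t^-)$ lies in the interior of a flat edge, where instead one argues that the retained edge endpoints together with the agent's own in-budget optimum already re-supply enough such outcomes to keep $v$ from approaching (alternatively, one builds $X^*$ by a monotone refinement that by construction retains every relevant anchor). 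The main obstacle is making this ``no anchor is lost'' step airtight — the flat-edge degeneracy and the boundary types with $v_1=v_2$ are the fiddly parts; the approach case, the two elementary facts, and the two reductions are routine.
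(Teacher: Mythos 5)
Your overall strategy coincides with the paper's: map each type to the transfer-maximal element of its choice set, note that this immediately gives the $\mu^*(v)=\mu(s(v))$ or $\mu^*(v)\succ_{tr}\mu(s(v))$ clause, and reduce IC to showing that passing from $X=R(M,\mu)$ to the range of the direct mechanism neither destroys the approach condition for approaching types (easy, since shrinking $X_b$ only helps) nor creates it for non-approaching types. You have also correctly located the crux in the second of these. But the proposal does not actually prove that crux: the paragraph beginning ``One then shows that the (nonempty) set of values supporting $(a^-,t^-)$\dots'' is a description of what needs to be shown, not a proof, and you yourself flag the flat-edge and boundary sub-cases as unresolved. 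As it stands this is a genuine gap, and it is exactly the step (Case~3 of Claim~\ref{cl:cl1} and Case~2 of Claim~\ref{cl:cl2}) that carries the whole theorem.

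The paper closes it with two short observations that make your ``search over supporting values'' and the flat-edge analysis unnecessary. First, if $(a',t')\in Ch(X;v_2)$ is a witness that the agent strictly prefers her in-budget choice $(a,t)\in Ch(X_b;v_1)$, then $t'\le t\le b$: otherwise $v_2(a'-a)\ge t'-t>0$ forces $a'>a$, and then $v_1(a'-a)\ge v_2(a'-a)\ge t'-t$ contradicts $v_1a-t>v_1a'-t'$. Hence the anchor lies in $X_b$ and therefore in $Ch(X_b;v_2)$. Second, the diagonal type to use is $(v_2,v_2)$ itself --- no supporting-value argument is needed: its two components coincide, so it never strictly approaches, and its choice set is exactly $Ch(X_b;v_2)$, which contains the anchor; consequently the anchor's transfer-maximal representative is the direct mechanism's outcome at $(v_2,v_2)$ and survives into $R(V,\mu^*)$. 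Your worry about an anchor sitting ``in the interior of a flat edge'' dissolves because the comparison that must be preserved is the one between transfer-maximal elements (the paper's $Ch^*$), and within $Ch(X;v_2)$ the transfer-maximal element is simultaneously the $v_1$-best one; retaining that single representative is enough to keep the non-approach condition intact. Separately, note that ``not approaching'' also covers the case $Ch(X;v_2)\sim_{v_1}Ch(X_b;v_1)$, where no strictly-worse witness exists; this needs its own (easy) case, as in the paper. Your preliminary pruning of weakly dominated outcomes is harmless but plays no role in the paper's argument and its justification is slightly off at $v_1=0$, where a dominated outcome can still be chosen.
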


The proof of this theorem uses a series of claims. 
%The following claim, which we write without proof, is useful.
%\begin{claim}
%\label{cl:first}
%Suppose $X \subseteq Y \subseteq Z$ and $v_i \in [0,\beta]$ such that $Ch(Y;v_i) \cap X$ is non-empty, then $Ch(X;v_i) \subseteq Ch(Y;v_i)$.
%\end{claim}

For any $X \subseteq Z$, let $X^*:=\{(a,t) \in X: (a,t) \in Ch(X;v)~\textrm{for some}~v \in V\}$.
\begin{claim}
\label{cl:cl1}
Suppose $X \subseteq Z$ is such that $Ch(X;v)$ is non-empty for all $v \in V$.
Then,
\begin{align*}
Ch(X;v) &= Ch(X^*;v)~\qquad~\forall~v \in V
\end{align*}
\end{claim}
\begin{proof}
Consider any $v \equiv (v_1,v_2)$ and $X \subseteq Z$ such that $Ch(X;v)$
is non-empty. This implies that $X^*$ is also non-empty. We consider three cases. \\

\noindent {\sc Case 1.} $Ch(X;v_2) \rhd_{v_1} Ch(X_b;v_1)$. Note that in this case,
$Ch(X;v)=Ch(X;v_2)$. Hence, $Ch(X;v_2) \subseteq X^*$.  Since $X^* \subseteq X$, we get $Ch(X^*;v_2)=Ch(X;v_2)$.
Now, $Ch(X_b;v_1) \unrhd_{v_1} Ch(X^*_b;v_1)$ since $X^*_b \subseteq X_b$. Then,
$Ch(X;v_2) \rhd_{v_1} Ch(X_b;v_1)$ implies that $Ch(X^*;v_2) \rhd_{v_1} Ch(X^*_b;v_1)$.
Hence, $Ch(X^*;v) = Ch(X^*;v_2)=Ch(X;v_2)=Ch(X;v)$. \\

\noindent {\sc Case 2.} $Ch(X_b;v_1) \sim_{v_1} Ch(X;v_2)$.  Then $Ch(X;v) = Ch(X_b,v_1)$ which implies $Ch(X_b;v_1) \subseteq X^*$. Hence $Ch(X_b;v_1) \subseteq X^*_b$. This implies that $Ch(X^*_b;v_1)=Ch(X_b;v_1)$ since $X^*_b \subseteq X_b$. Now, $X^* \subseteq X$ implies that $Ch(X;v_2) \unrhd_{v_1} Ch(X^*;v_2)$. Then $Ch(X_b;v_1) \sim_{v_1} Ch(X;v_2)$ implies $Ch(X^*_b;v_1) \unrhd_{v_1} Ch(X^*;v_2)$. Therefore, $Ch(X^*;v_2) \rhd_{v_1} Ch(X^*_b;v_1)$ cannot hold. Hence, $Ch(X^*;v) = Ch(X^*_b;v_1) = Ch(X_b;v_1) = Ch(X;v)$.\\

\noindent {\sc Case 3.} There exists $(a',t') \in Ch(X;v_2)$ such that $v_1 a - t > v_1 a' - t'$ for every $(a,t) \in Ch(X_b;v_1)$. Fix any $(a,t) \in Ch(X_b;v_1)$. If $t' > t$ then $v_1 a - t > v_1 a' - t'$ implies $v_2 a - t > v_2 a' - t'$ since $v_1 \geq v_2$. This contradicts $(a',t') \in Ch(X;v_2)$. Therefore, $t' \leq t \leq b$ which implies $(a',t') \in Ch(X_b;v_2)$.  Now, consider type $v'=(v_2,v_2)$ and observe that $Ch(X;v') = Ch(X_b;v_2)$. Therefore, $Ch(X_b;v_2) \subseteq X^*$ which then implies $(a',t') \in X^*$ since $(a',t') \in Ch(X_b;v_2)$. Hence, $(a',t') \in Ch(X^*;v_2)$ since $X^* \subseteq X$.

Since $Ch(X^*_b;v_1)=Ch(X_b;v_1)$ as in Case 2, we conclude $(a',t') \in Ch(X^*;v_2)$ such that $v_1 a - t > v_1 a' - t'$ for every $(a,t) \in Ch(X^*_b;v_1)$. Then $Ch(X^*;v) = Ch(X^*_b;v_1) = Ch(X_b;v_1) = Ch(X;v)$.
\end{proof}

For every $X \subseteq Z$, an outcome $(a,t)$ is {\bf tr-max} in $X$
if $(a,t) \succ_{tr} (a',t')$ for all $(a',t') \in X \setminus (a,t)$.
Consider any value $v_i \in [0,\beta]$. If $v_i \ne 0$ and a tr-max outcome
exists in $Ch(X;v_i)$, then it is unique. If $v_i=0$, there may be more
than one tr-max outcome in $Ch(X;v_i)$. In that case, we assign one of them
as tr-max arbitrarily. We denote this tr-max outcome at every $v_i \in [0,\beta]$
and every $X$ as $Ch^*(X;v_i)$:
\begin{align*}
Ch^*(X;v_i) &= (a,t)~\qquad~\textrm{if}~(a,t)~\textrm{is tr-max in}~Ch(X;v_i)
\end{align*}
The following claim is useful.
\begin{claim}
\label{cl:nempty}
Suppose $X \subseteq Z$ and $v_i \in [0,\beta]$ be such that $Ch(X;v_i)$
is non-empty. Then, $Ch^*(X;v_i)$ exists.
\end{claim}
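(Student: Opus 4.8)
The claim is really just about the set $Ch(X;v)$ (the second argument being one-dimensional, so $Ch$ here is the ordinary quasilinear argmax), so the plan is first to pin down the shape of that set, then to exhibit its transfer-maximal element. First I would note that, since $Ch(X;v)\neq\emptyset$, the number $c:=\sup_{(a',t')\in X}(va'-t')$ is finite and attained, whence $Ch(X;v)=\{(a,t)\in X: va-t=c\}$; in words, every outcome of $Ch(X;v)$ leaves a type-$v$ decision-maker with the same payoff $c$.

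Next I would dispose of the case $v=0$. There $va-t=-t$, so every $(a,t)\in Ch(X;0)$ has $t=-c$; all transfers coincide, and by the tie-breaking convention stated just before the claim any element of $Ch(X;0)$ is declared tr-max, so $Ch^*(X;0)$ exists. Hence assume $v>0$. On $Ch(X;v)$ we then have $t=va-c$, i.e.\ the transfer is a strictly increasing affine function of the allocation $a$; since $a\in[0,1]$ these transfers are bounded above by $v-c$. Put $\bar t:=\sup\{t:(a,t)\in Ch(X;v)\}$ and $\bar a:=(\bar t+c)/v\in[0,1]$. If the outcome $(\bar a,v\bar a-c)$ belongs to $X$, it lies in $Ch(X;v)$ and, because the transfer is injective in $a$ along this indifference line, it strictly transfer-dominates every other element of $Ch(X;v)$, so it is the desired $Ch^*(X;v)$ (and, as already observed, the unique tr-max outcome).

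The one substantive step, and the step I expect to be the obstacle, is therefore showing that this supremum is attained, i.e.\ $(\bar a,v\bar a-c)\in X$. This is where I would invoke closedness of $X$ — which holds in every context where the claim is used (a mechanism's range, or its closure $\bar{X}'$ as in Appendix~\ref{sec:pmain}): then $Ch(X;v)=X\cap\{(a,t):va-t=c\}$ is a closed and bounded subset of the plane, hence compact, and the continuous map $(a,t)\mapsto t$ attains its maximum on it. For a genuinely arbitrary $X\subseteq Z$ this fails — e.g.\ $X=\{(a,va-c):a\in[0,1)\}$ with $v>0$ has $Ch(X;v)=X\neq\emptyset$ but no maximal transfer — so the argument cannot be made on pure set-theoretic grounds and must lean on the closedness available in the applications. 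Everything around this step is routine: it is just the remark that $Ch(X;v)$ sits on a single indifference line along which transfer and allocation increase together, and that allocations live in the compact interval $[0,1]$.
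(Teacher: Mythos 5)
Your proposal follows the same route as the paper's own proof: observe that $Ch(X;v)$ lies on the single indifference line $\{(a,t): va-t=c\}$, note that along this line the transfer is a strictly increasing affine function of the allocation when $v>0$, handle $v=0$ by the tie-breaking convention, and then maximize the transfer. The difference is that you are more careful at the one step that matters. The paper substitutes $t'=va'-u$ and asserts the problem ``is equivalent to solving $\max_{a'\in[0,1]}[va'-u]$,'' silently replacing the true feasible set $\{a'\in[0,1]:(a',va'-u)\in X\}$ by the whole interval $[0,1]$, whose compactness is then invoked to get attainment. That replacement is exactly the gap you flag: for a general $X\subseteq Z$ the feasible set is bounded but need not be closed, and your example $X=\{(a,va-c):a\in[0,1)\}$ shows the claim as stated is literally false without an additional hypothesis. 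Your fix --- assume $X$ closed, so that $Ch(X;v)$ is compact and $(a,t)\mapsto t$ attains its maximum on it --- is the natural repair, but be aware that it does not come for free in the paper's own applications: the claim is invoked for $X=R(\tilde q,\tilde p)$ in Lemma \ref{lem:p66} and for $X=\{Ch(R(M,\mu);v):v\in V\}$ in Theorem \ref{theo:rev}, and ranges of mechanisms (or images of choice correspondences) are not automatically closed. So your argument is correct modulo the closedness hypothesis you state, and the obstacle you identify is a genuine, though minor and repairable, gap in the paper's proof rather than in yours.
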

\begin{proof}
If $Ch(X;v_i)$ is non-empty, then let $(a,t) \in Ch(X;v_i)$. Let $u_i=a v_i - t$.
Now, consider the following maximization program:
\begin{align*}
\max_{(a',t')} t'~~\textrm{s.t.}~~v_ia' - t' = u_i
\end{align*}
Each outcome in $Ch^*(X,v_i)$ must be a solution to this. But this is equivalent to solving
\begin{align*}
\max_{a' \in [0,1]} [v_ia'-u_i]
\end{align*}
Since this is a maximization of a linear function over a compact set, an optimal solution exists.
This shows that $Ch^*(X;v_i)$ exists.
\end{proof}

We can now define $Ch^*(X;v)$ analogous to $Ch(X;v)$ for every $v \in V$:
\begin{align*}
Ch^*(X;v) =
\begin{cases}
Ch^*(X;v_2) &~\textrm{if}~Ch^*(X;v_2) \rhd_{v_1} Ch^*(X_b;v_1) \\
Ch^*(X_b;v_1) &~\textrm{otherwise}
\end{cases}
\end{align*}
where we abuse notation to write $Ch^*(X;v_2) \rhd_{v_1} Ch^*(X_b;v_1)$
instead of $\{Ch^*(X;v_2)\} \rhd_{v_1} \{Ch^*(X_b;v_1)\}$.  Note that if $Ch^*(X;v_2) \rhd_{v_1} Ch^*(X_b;v_1)$ is not true, then $Ch^*(X_b;v_1) \unrhd_{v_1} Ch^*(X;v_2)$ is true.

We can now state a claim analogous to Claim \ref{cl:cl1}.
For every $X \subseteq Z$, define $\widehat{X}$ as
\begin{align*}
\widehat{X}:= \{Ch^*(X;v):v \in V\}
\end{align*}
\begin{claim}
  \label{cl:cl2}
  Suppose $X \subseteq Z$ is such that $Ch(X;v)$ is non-empty for all $v \in V$.
  Then,
\begin{align*}
  Ch^*(X;v) &= Ch^*(\widehat{X};v)~\qquad~\forall~v \in V
\end{align*}
\end{claim}
\begin{proof}
If $X \subseteq Z$ is such that $Ch(X;v)$ is non-empty for all $v$,
by Claim \ref{cl:nempty}, $Ch^*(X;v)$ exists for all $v$.
Pick $v$ and consider the two possible cases. \\

\noindent {\sc Case 1.} $Ch^*(X;v_2) \rhd_{v_1} Ch^*(X_b;v_1)$. Then, $Ch^*(X;v)=Ch^*(X;v_2)$. Hence, $Ch^*(X;v_2) \in \widehat{X}$.
Since $\widehat{X} \subseteq X$, we get $Ch^*(\widehat{X};v_2)=Ch^*(X;v_2)=Ch^*(X;v)$.
Finally, $Ch^*(X_b;v_1) \unrhd_{v_1} Ch^*(\widehat{X}_b;v_1)$ since $\widehat{X}_b \subseteq X_b$.
So, we have
$Ch^*(X;v_2) \rhd_{v_1} Ch^*(X_b;v_1) \unrhd_{v_1} Ch^*(\widehat{X}_b;v_1)$. Hence,
$Ch^*(\widehat{X};v_2)=Ch^*(X;v_2) \rhd_{v_1} Ch^*(\widehat{X}_b;v_1)$. As a result,
\begin{align*}
Ch^*(\widehat{X};v)=Ch^*(\widehat{X};v_2)=Ch^*(X;v_2)=Ch^*(X;v)
\end{align*}

\noindent {\sc Case 2.} $Ch^*(X_b;v_1) \unrhd_{v_1} Ch^*(X;v_2)$. Then, $Ch^*(X;v)=Ch^*(X_b;v_1)$. Now, let $Ch^*(X_b;v_1) = \{(a,t)\}$ and $Ch^*(X;v_2) = \{(a',t')\}$ and note that $t \leq b$.  Condition of the case implies $v_1a-t \geq v_1a'-t'$. $Ch^*(X;v_2) = \{(a',t')\}$ implies $v_2a'-t' \geq v_2a-t$. Adding the inequalities and using $v_1 \geq v_2$ we derive $a' \leq a$. The second inequality then implies $t' \leq t \leq b$. Therefore, $Ch^*(X;v_2) = Ch^*(X_b;v_2)$. Then consider a type $v' = (v_2,v_2)$ and observe that $Ch^*(X;v') = Ch^*(X;v_2)$. Hence, $Ch^*(X;v_2) \in \widehat{X}$. Since $\widehat{X} \subseteq X$ we then have $Ch^*(X;v_2)=Ch^*(\widehat{X};v_2)$.

Also, $Ch^*(X_b;v_1)=Ch^*(X;v)$ implies $Ch^*(X_b;v_1) \in \widehat{X}$. Hence,
$Ch^*(X_b;v_1) \in \widehat{X}_b$. This implies that $Ch^*(X_b;v_1)=Ch^*(\widehat{X}_b;v_1)$ since
$\widehat{X}_b \subseteq X_b$. As a result, we have $Ch^*(\widehat{X}_b;v_1)=Ch^*(X_b;v_1) \unrhd_{v_1} Ch^*(X;v_2)=Ch^*(\widehat{X};v_2)$.
Hence, $Ch^*(\widehat{X};v)=Ch^*(\widehat{X}_b;v_1)=Ch^*(X_b;v_1)=Ch^*(X;v)$.
\end{proof}

\noindent {\sc Proof of Theorem \ref{theo:rev}}.

\begin{proof}
Let $X:=\{Ch(R(M,\mu);v):v \in V\}$.
Since $s$ is an equilibrium of $(M,\mu)$, $\mu(s(v)) \in Ch(R(M,\mu);v)$.
Hence, $Ch(R(M,\mu);v)$ is non-empty for each $v$.
By Claim \ref{cl:cl1}, $Ch(R(M,\mu);v)=Ch(X;v)$.
So, for each $v$, $Ch(X;v)$ is non-empty, and by Claim \ref{cl:nempty}, $Ch^*(X;v)$ exists.

Now, for every $v \in V$, define
\begin{align*}
\mu^*(v) &:= Ch^*(X;v)
\end{align*}
Since $s$ is an equilibrium in $(M,\mu)$,
for every $v$, we have $\mu(s(v)) \in Ch(R(M,\mu);v) = Ch(X;v)$.
Since $\mu^*(v)= Ch^*(X;v)$, we have $\mu^*(v)=\mu(s(v))$ or
$\mu^*(v) \succ_{tr} \mu(s(v))$.

\noindent Finally, $R(V,\mu^*)=\{\mu^*(v): v \in V\}=\{Ch^*(X;v): v \in V\} = \widehat{X}$.
By Claim \ref{cl:cl2}, for all $v \in V$, we have $Ch^*(X;v)= Ch^*(\widehat{X};v)$, which
further implies that
\begin{align*}
\mu^*(v) = Ch^*(\widehat{X};v) \in Ch(\widehat{X};v),
\end{align*}
which is the required incentive compatibility constraint.
\end{proof}


\begin{thebibliography}{25}
\newcommand{\enquote}[1]{``#1''}
\expandafter\ifx\csname natexlab\endcsname\relax\def\natexlab#1{#1}\fi

\bibitem[\protect\citeauthoryear{Armstrong}{Armstrong}{2000}]{Armstrong00}
\textsc{Armstrong, M.} (2000): \enquote{Optimal multi-object auctions,}
  \emph{The Review of Economic Studies}, 67, 455--481.

\bibitem[\protect\citeauthoryear{Baisa and Rabinovich}{Baisa and
  Rabinovich}{2016}]{Baisa16}
\textsc{Baisa, B. and S.~Rabinovich} (2016): \enquote{Optimal auctions with
  endogenous budgets,} \emph{Economics Letters}, 141, 162--165.

\bibitem[\protect\citeauthoryear{Bikhchandani and Mishra}{Bikhchandani and
  Mishra}{2022}]{BM22}
\textsc{Bikhchandani, S. and D.~Mishra} (2022): \enquote{Selling two identical
  objects,} \emph{Journal of Economic Theory}, 200, 105397.

\bibitem[\protect\citeauthoryear{Borgers}{Borgers}{2015}]{borgers15}
\textsc{Borgers, T.} (2015): \emph{An Introduction to the Theory of Mechanism
  Design}, Oxford University Press.

\bibitem[\protect\citeauthoryear{Burkett}{Burkett}{2015}]{Burkett15}
\textsc{Burkett, J.} (2015): \enquote{Endogenous budget constraints in
  auctions,} \emph{Journal of Economic Theory}, 158, 1--20.

\bibitem[\protect\citeauthoryear{Burkett}{Burkett}{2016}]{Burkett16}
---\hspace{-.1pt}---\hspace{-.1pt}--- (2016): \enquote{Optimally constraining a
  bidder using a simple budget,} \emph{Theoretical Economics}, 11, 133--155.

\bibitem[\protect\citeauthoryear{Carroll}{Carroll}{2017}]{Carroll17}
\textsc{Carroll, G.} (2017): \enquote{Robustness and separation in
  multidimensional screening,} \emph{Econometrica}, 85, 453--488.

\bibitem[\protect\citeauthoryear{Chawla, Hartline, and Kleinberg}{Chawla
  et~al.}{2007}]{Chawla07}
\textsc{Chawla, S., J.~D. Hartline, and R.~Kleinberg} (2007):
  \enquote{Algorithmic pricing via virtual valuations,} in \emph{Proceedings of
  the 8th ACM conference on Electronic commerce}, ACM, 243--251.

\bibitem[\protect\citeauthoryear{Chawla, Hartline, Malec, and Sivan}{Chawla
  et~al.}{2010}]{Chawla10}
\textsc{Chawla, S., J.~D. Hartline, D.~L. Malec, and B.~Sivan} (2010):
  \enquote{Multi-parameter mechanism design and sequential posted pricing,} in
  \emph{Proceedings of the forty-second ACM symposium on Theory of computing},
  ACM, 311--320.

\bibitem[\protect\citeauthoryear{Che and Gale}{Che and Gale}{2000}]{Che00}
\textsc{Che, Y.-K. and I.~Gale} (2000): \enquote{The optimal mechanism for
  selling to a budget-constrained buyer,} \emph{Journal of Economic Theory},
  92, 198--233.

\bibitem[\protect\citeauthoryear{Daskalakis, Deckelbaum, and Tzamos}{Daskalakis
  et~al.}{2017}]{Dask17}
\textsc{Daskalakis, C., A.~Deckelbaum, and C.~Tzamos} (2017): \enquote{Strong
  Duality for a Multiple-Good Monopolist,} \emph{Econometrica}, 85, 735--767.

\bibitem[\protect\citeauthoryear{de~Clippel}{de~Clippel}{2014}]{De14}
\textsc{de~Clippel, G.} (2014): \enquote{Behavioral implementation,} \emph{The
  American Economic Review}, 104, 2975--3002.

\bibitem[\protect\citeauthoryear{Gonczarowski, Immorlica, Li, and
  Lucier}{Gonczarowski et~al.}{2021}]{GIYB21}
\textsc{Gonczarowski, Y.~A., N.~Immorlica, Y.~Li, and B.~Lucier} (2021):
  \enquote{Revenue Maximization for Buyers with Outside Options,} \emph{arXiv
  preprint arXiv:2103.03980}.

\bibitem[\protect\citeauthoryear{Hart and Nisan}{Hart and Nisan}{2017}]{Hart17}
\textsc{Hart, S. and N.~Nisan} (2017): \enquote{Approximate revenue
  maximization with multiple items,} \emph{Journal of Economic Theory}, 172,
  313--347.

\bibitem[\protect\citeauthoryear{Hart and Nisan}{Hart and Nisan}{2019}]{HN19}
---\hspace{-.1pt}---\hspace{-.1pt}--- (2019): \enquote{Selling multiple
  correlated goods: Revenue maximization and menu-size complexity,}
  \emph{Journal of Economic Theory}, 183, 991--1029.

\bibitem[\protect\citeauthoryear{Laffont and Robert}{Laffont and
  Robert}{1996}]{Laffont96}
\textsc{Laffont, J.-J. and J.~Robert} (1996): \enquote{Optimal auction with
  financially constrained buyers,} \emph{Economics Letters}, 52, 181--186.

\bibitem[\protect\citeauthoryear{Li}{Li}{2021}]{Li21}
\textsc{Li, Y.} (2021): \enquote{Mechanism design with financially constrained
  agents and costly verification,} \emph{Theoretical Economics}, 16,
  1139--1194.

\bibitem[\protect\citeauthoryear{Malenko and Tsoy}{Malenko and
  Tsoy}{2019}]{MT19}
\textsc{Malenko, A. and A.~Tsoy} (2019): \enquote{Selling to advised buyers,}
  \emph{American Economic Review}, 109, 1323--48.

\bibitem[\protect\citeauthoryear{Manelli and Vincent}{Manelli and
  Vincent}{2007}]{Manelli07}
\textsc{Manelli, A.~M. and D.~R. Vincent} (2007): \enquote{Multidimensional
  mechanism design: Revenue maximization and the multiple-good monopoly,}
  \emph{Journal of Economic Theory}, 137, 153--185.

\bibitem[\protect\citeauthoryear{Mussa and Rosen}{Mussa and
  Rosen}{1978}]{Mussa78}
\textsc{Mussa, M. and S.~Rosen} (1978): \enquote{Monopoly and product quality,}
  \emph{Journal of Economic theory}, 18, 301--317.

\bibitem[\protect\citeauthoryear{Myerson}{Myerson}{1981}]{Myerson81}
\textsc{Myerson, R.~B.} (1981): \enquote{Optimal Auction Design,}
  \emph{Mathematics of Operations Research}, 6, 58--73.

\bibitem[\protect\citeauthoryear{Pai and Vohra}{Pai and Vohra}{2014}]{Pai14}
\textsc{Pai, M.~M. and R.~Vohra} (2014): \enquote{Optimal auctions with
  financially constrained buyers,} \emph{Journal of Economic Theory}, 150,
  383--425.

  \bibitem[\protect\citeauthoryear{Paramahamsa}{Paramahamsa}{2021}]{P21}
  \textsc{Paramahamsa, K.} (2011): \enquote{Essays in multidimensional mechanism design,}
    \emph{PhD Thesis Submitted to Indian Statistical Institute}.

\bibitem[\protect\citeauthoryear{Pavlov}{Pavlov}{2011}]{Pa11}
\textsc{Pavlov, G.} (2011): \enquote{Optimal mechanism for selling two goods,}
  \emph{The BE Journal of Theoretical Economics}, 11.

\bibitem[\protect\citeauthoryear{Riley and Zeckhauser}{Riley and
  Zeckhauser}{1983}]{Riley83}
\textsc{Riley, J. and R.~Zeckhauser} (1983): \enquote{Optimal selling
  strategies: When to haggle, when to hold firm,} \emph{The Quarterly Journal
  of Economics}, 98, 267--289.

\bibitem[\protect\citeauthoryear{Saran}{Saran}{2011}]{Sa11}
\textsc{Saran, R.} (2011): \enquote{Menu-dependent preferences and revelation
  principle,} \emph{Journal of Economic Theory}, 146, 1712--1720.

\end{thebibliography}
\end{document}